\documentclass[lettersize,journal]{IEEEtran}
\usepackage{graphicx}    % 插入图片
\usepackage{subcaption}  % 支持 subfigure
\usepackage{amsmath,amsfonts,amssymb}
\usepackage{algorithmic}
\usepackage{algorithm}
\usepackage{array}
\usepackage{tabularx, booktabs}
\usepackage{textcomp}
\usepackage{stfloats}
\usepackage{url}
\usepackage{verbatim}
\usepackage{graphicx}
\usepackage{cite}
\usepackage{multirow} % 导言区添加
\usepackage{tikz}

\newtheorem{definition}{\bf Definition}
\newtheorem{proposition}{\bf Proposition}
\newtheorem{theorem}{\bf Theorem}
\newtheorem{lemma}{\bf Lemma}
\newtheorem{corollary}{\bf Corollary}
\newtheorem{remark}{\bf Remark}

\usetikzlibrary{automata, positioning, arrows.meta}
\begin{document}

\title{Capacity Scalability of LEO Constellations With Dynamic Link Failures}

\author{
  \IEEEauthorblockN{Wei Li\IEEEauthorrefmark
{1}, \:\: Min Sheng\IEEEauthorrefmark
{1}, \emph{Fellow, IEEE}}
%,  Pasquale Pace\IEEEauthorrefmark
%{2} \emph{Member, IEEE}, \\
%Giancarlo Fortino\IEEEauthorrefmark
%{2},  \emph{Fellow, IEEE},  and Jiandong Li\IEEEauthorrefmark
%{1},  \emph{Fellow, IEEE}
\\
\IEEEauthorblockA{ \IEEEauthorrefmark
{1}State Key Laboratory of Integrated Service Networks, Xidian University, Xi’an, Shaanxi, 710071, China}\\
Email: xidianliwei@stu.xidian.edu.cn, msheng@mail.xidian.edu.cn\\
}

% The paper headers
%\markboth{Journal of \LaTeX\ Class Files,~Vol.~14, No.~8, August~2021}%
%{Shell \MakeLowercase{\textit{et al.}}: A Sample Article Using IEEEtran.cls for IEEE Journals}

%\IEEEpubid{0000--0000/00\$00.00~\copyright~2021 IEEE}
% Remember, if you use this you must call \IEEEpubidadjcol in the second
% column for its text to clear the IEEEpubid mark.

\maketitle

\begin{abstract} Dynamic link failures disrupt the connectivity and geometric symmetry of the constellation structure, thereby increasing protocol overhead and degrading the effective capacity for traffic transport. The fundamental relationship between constellation size and effective capacity under protocol overhead constraints  remains unclear.  To this end, we define capacity scalability as the ratio of constellation capacity under non-failure conditions to protocol overhead.   Specifically, if ISL states follow a two-state discrete Markov chain and the maintenance period is $k \geq 1$, the upper bound of capacity scalability under the uniform traffic pattern is $O(1/n)$, where $n$ is the number of satellites. With perfect information about the constellation topology, the upper bound  can be achieved via shortest-path routing.  For any given protocol, there exists an optimal constellation deployment scale in terms of capacity scalability. When the constellation size is below this optimum scale, capacity scalability increases with constellation size, thereby improving effective capacity. Increasing the maintenance period $k$ can improve capacity scalability, but it does not change the fact that the capacity scalability  converges to zero when the constellation size exceeds the optimal scale.
 
\end{abstract}

\begin{IEEEkeywords}Capacity
Scalability,  LEO constellations, overhead, dynamic link failure.
\end{IEEEkeywords}

\section{Introduction}
\IEEEPARstart{A}{s} more satellites are deployed in low Earth orbit, the available orbital space resources have decreased rapidly, leading to an exponential increase in the frequency of collision avoidance maneuvers \cite{letizia2017extending,reiland2021assessing}. Frequent maneuvers and unpredictable space weather (e.g., solar radiation, micro-space debris  and single-event upsets) inevitably cause variations in inter-satellite link (ISL) states, thereby disrupting the connectivity and symmetry of the constellation \cite{olivieri2020large,sheng2025effects,ramanathan2024weathering}.  However, the data transport capacity of the constellation is closely dependent on the stability of the constellation structure and the efficiency of the protocols\cite{11129665,sheng2023coverage}. The main function of the protocols is to maintain ISLs with low overhead and to quickly plan multi-hop routing paths for data packets in the dynamic constellation structure.  The operation of the protocols essentially consumes the capacity provided by the dynamic constellation structure. Therefore, there exists a fundamental trade-off between protocol overhead and constellation capacity, especially in large-scale constellations. This directly determines whether the capacity gain from increasing the number of satellites can be effectively used for transporting data packets in a constellation with dynamic link failures.

Existing studies on capacity and protocol overhead are usually done separately. Research on constellation capacity mainly focuses on improving capacity and analyzing its influencing factors, such as traffic distribution\cite{1642730,sun2003capacity,6465577,ramanathan2017symptotics}, ISL failures\cite{10230313,10361587,10017632} and ISL establishment rules\cite{11481147,10826930,10622303,11044684}. In contrast, research on protocol overhead mainly focuses on designing specific strategies, such as distributed or centralized routing\cite{ramakanth2025,11068466,11068516,11027048,10356345}, load balancing strategies\cite{li2024analyzing,li2024robustness,BAI2025111089,ramakanth2026optimal} and ISL maintenance strategies\cite{11148805,11161010}. This makes existing results unable to explain how protocol overhead constrains capacity as the constellation scale increases, as well as how ISL evolution behavior affects protocol overhead.  To fill this gap, we define capacity scalability as the ratio of constellation capacity under non-failure conditions to protocol overhead. Based on this definition, we  analyze the fundamental behavior of capacity scalability under dynamic link failures. The total overhead includes contention overhead ($\sigma$), incurred during data packet routing and determined by the number of hops and the number of source–destination pairs, as well as consensus overhead ($\omega$), incurred for ISL maintenance and determined by ISL evolution the  maintenance period and the number of ISLs. Overall,  the main contributions are summarized as follows:

\begin{itemize}
   
    \item We use a two-state discrete-time Markov chain to describe the dynamic evolution and memory behavior of ISLs. The lower bound of the consensus overhead required to maintain each ISL is derived in terms of information entropy. Based on this result, the upper bound of capacity scalability under the uniform traffic model is obtained as
    \begin{equation}
   \tau\leq \tau(n,k) = \frac{16\sqrt{n}}{1 + \sigma n^{1.5} + 4n\bar{H}_k(\alpha,\beta)}=O(1/n),
    \label{ITR1}
\end{equation}
where $\alpha$ and $\beta$ are the failure probability and the recovery probability, respectively. $\sigma$ denotes the overhead incurred for transmitting a data packet over each hop. $\bar{H}_k(\alpha,\beta)
= k^{-1} \left[ h(\pi_{1}) + (k-1)\bigl(\pi_{1} h(\alpha) + \pi_{0} h(\beta)\bigr) \right]$ is  the lower bound of the consensus overhead required to maintain each ISL every $k$ time slots, $h(x) = -x\log_2x - (1-x)\log_2(1-x),\:x>0$. $\pi_{1}$ and $\pi_{0}$ are the steady-state probabilities of the ISL evolution. The overhead grows much faster, at a rate of $\Theta(n^{1.5})$, than the capacity growth rate of $\Theta(\sqrt{n})$\footnote{$f(n)=O(g(n))$ if there exist $c>0$ and $n_{0}>0$ such
that $f(n)<cg(n)$ for $n>n_{0}$. $f(n)=\Omega(g(n))$ if $g(n)=O(f(n))$.
$f(n)=\Theta(g(n))$ if $f(n)=O(g(n))$ and $g(n)=O(f(n))$.}. This indicates that there exists an optimal constellation size $n^{*}$ that maximizes capacity scalability.  $n^{*}$ is determined by solving the equation $2\sigma n^{1.5} + 4n\bar{H}_k(\alpha,\beta) = 1$.  When $n < n^{*}$, the capacity scalability increases as $n$ increases, so increasing the constellation size is beneficial. In contrast, when $n > n^{*}$, the capacity scalability converges to zero as $n$ increases, so the gain from further increasing the constellation size converges to zero. As the contention overhead (consensus overhead) approaches zero, the optimal constellation size is given by $n_{*} \approx \bar{H}_k^{-1}(\alpha,\beta)$ ($\left( 2\sigma \right)^{-2/3}$). The results further indicate that, for the same reduction in overhead, the optimal constellation size is more sensitive to consensus overhead than to contention overhead.  In particular, when the total contention overhead is constant ($\sigma = c$), the upper bound of capacity scalability increases from $O(1/n)$ to $O(1/\sqrt{n})$, which merely  reduces the convergence rate of capacity scalability to zero.  To maintain the capacity scalability, therefore, it is necessary to reduce the number of source-destination pairs, the average hop count and the consensus overhead. 
    
    \item For any ISL failure evolution behavior, increasing the ISL maintenance period ($k$) can only improve capacity scalability but cannot change the inherent trend that capacity scalability converges to zero when $n > n^{*}$. This is because the lower bound of the consensus overhead required to maintain each ISL is non-zero, i.e., 
    $$\omega \geq \bar{H}_{\infty}(\alpha, \beta) =  \frac{\beta}{\alpha+\beta} h(\alpha) + \frac{\alpha}{\alpha+\beta} h(\beta)>0.$$
Furthermore, the capacity scalability is analyzed under different ISL failure behaviors. When the ISL state exhibits stronger memory, i.e., $|\mu| = |1 - \alpha - \beta|$ is large, increasing the maintenance period $k$ yields little improvement in capacity scalability. This is due to the high predictability of ISL evolution behavior. If an ISL is in a failed (non-failed) state at time slot $t$, it remains in the failed (non-failed) state at time slot $t+k$ with high probability. We also observe that optimal capacity scalability is achieved when the ISL failure probability is low while the recovery probability is high. Simulation experiments are performed to validate the theoretical observations.   
\end{itemize}

The rest of this paper is organized as follows.   The related work on existing studies is reviewed in Section \ref{RWII}. The system model and ISL failure model are presented in Section \ref{SMIII}.   In Section \ref{SECII}, the lower bound of consensus overhead, the upper bound of capacity scalability and the optimal constellation size in terms of scalability are derived.  We discuss the impact of the maintenance period and ISL evolution behavior on the capacity scaling law separately in Section \ref{SECV}. Based on the shortest-path routing strategy, in Section \ref{SECVI}, we simulate the ISL evolution behavior and verify the upper bound of capacity scalability.

\section{RELATED WORKS\label{RWII}}

A LEO constellation with a one-satellite, four-link configuration is equivalent to a 2-Torus structure, except for polar constellations\cite{1642730,sun2003capacity,11044684}. Under uniform traffic pattern, the capacity upper bound of a symmetric 2D-torus structure is $8\sqrt{n}$, by applying bisection arguments \cite{1642730}. 
This result shows that the capacity growth rate achieved by increasing the constellation size is at most $O(1/\sqrt{n})$. Under this framework, the lower bound on spare capacity for recovering from a link or node failure is derived, together with routing and restoration schemes that achieve this bound \cite{sun2003capacity}. The results show that the spare capacity requirement for a link-based restoration scheme is nearly $n$ times that for a path-based scheme. In \cite{11481147,10826930,10622303,11044684}, the effects of ISL establishment rules and ISL number on constellation performance under uniform traffic are studied, aiming to achieve the capacity upper bound under failure cases. In \cite{11481147}, a dynamic ISL reconfiguration method is proposed to mitigate failure effects by reinforcing bottleneck cuts. In \cite{10826930}, a topology design method is proposed that maximizes capacity by twisting edge ISLs in a 2D-torus structure. In \cite{10622303}, a topology design algorithm is proposed to minimize redundant ISLs caused by constellation asymmetry while maintaining constellation capacity.  These works focus on specific ISL establishment rules to maintain or improve constellation capacity and lack generality. To address this theoretical limitation, the lower bounds of the average shortest path length in vertex-symmetric constellations are derived, and it is shown that alternative topologies can outperform the standard mesh grid in both diameter and efficiency \cite{11044684}.  In \cite{10230313,10361587,10017632}, by modeling complex dynamic cascading failure processes of nodes and ISLs in LEO constellations, the impacts of buffer limits, processing constraints, and real-time routing on network-wide outages are analyzed through simulations.  

Although these studies provide theoretical results on capacity and valuable insights, they all neglect the  overhead required to implement these algorithms, which is especially important in large-scale constellations. Fortunately, the effective capacity incorporating protocol overhead has been explored in large-scale grid-based wireless ad hoc networks.  A framework is developed to evaluate the capacity and scalability of finite grid-based wireless networks under different node degrees. It captures protocol overhead, congestion bottlenecks, traffic heterogeneity and other real-world concerns \cite{6465577,ramanathan2017symptotics}.  These studies on wireless ad hoc networks provide useful insights for analyzing the capacity scalability of LEO constellations under dynamic link failures. In addition to routing overhead, the main difference is that LEO constellations focus on ISL maintenance overhead rather than interference avoidance overhead. Therefore, these results cannot be directly used to explain capacity scalability in LEO constellations.

The consensus among the aforementioned theoretical studies on constellation capacity is that the design of intelligent routing and  ISL establishment algorithms plays a key role in enhancing and maintaining the constellation capacity. Although routing design in wireless ad hoc networks is not a new topic, recent studies have extensively investigated routing design and evaluation in LEO constellations from multiple perspectives.  In \cite{ramakanth2025},  the trade-off between centralized and distributed routing in faulty constellations is studied. It shows that in highly dynamic bufferless networks, the distributed scheme achieves higher throughput than the centralized scheme, while in buffered networks it yields lower delay. To reduce packet loss and improve path reliability, distributed routing algorithms have been designed based on the predictable geometric structure of the constellation, achieving significant improvements in path success ratio, delay and average hop count\cite{11068466,11068516,11027048,10356345}.  ISL establishment rules \cite{10356345}, satellite-to-ground reliable communication \cite{11068516} and hybrid centralized/distributed strategies \cite{11068516,11027048,10356345} are also considered in the routing design. Essentially, these studies utilize the distributed routing design paradigm to improve adaptability to dynamic topologies.
Due to the limited transport capacity of individual satellites and ISLs, after partial route reconfiguration, packet congestion is highly likely to occur at certain satellites or ISLs, thereby increasing end-to-end latency and packet loss rate \cite{li2024analyzing,li2024robustness}. This phenomenon is referred to as cascading failures. The spatial non-uniformity of traffic carried by the LEO constellation, resulting from differences in population density, further intensifies these cascading effects. To suppress cascading failures and balance traffic load, a virtual node model is used for load balancing in \cite{li2024analyzing} by integrating ground station deployment into the network model. In contrast, cascading failures are analyzed by combining complex network and hypernetwork theories in \cite{li2024robustness}, where corresponding strategies are developed for different types of failures and attack scenarios. An intelligent elastic routing framework based on graph neural networks and deep reinforcement learning is also proposed in \cite{BAI2025111089}. The load balancing problem under sparse traffic is modeled as a linear programming problem, establishing the lower bound on the worst-case link load for any oblivious routing scheme and presenting a scheme that achieves this bound \cite{ramakanth2026optimal}. The effects of unpredictable space weather (such as single-event upsets and solar storms) on the fundamental performance of satellite constellations are also assessed in \cite{11148805,11161010}. These studies offer important insights into the performance of satellite constellations under failures. However, they generally neglect the relationship between protocol overhead and capacity as the constellation size increases.

\section{System Models\label{SMIII}}
A LEO constellation consists of $P$ orbital planes, with $M$ satellites evenly distributed in each orbital plane. The position of each satellite is represented as $(x, y)$, where $x \in \{0, 1, \dots, P-1\}$ denotes the index of the orbital plane, and $y \in \{0, 1, \dots, M-1\}$ denotes the index of the satellite within orbital plane $x$. Each satellite establishes two co-plane links and two inter-plane links. In the ideal case, the number of operational satellites  is
$n=MP$. For describing the geometric  of the LEO constellation, it is  represented as \( \theta: n/P/F \). \( \theta \in (0^\circ, 90^\circ) \) is the inclination angle of the orbital planes with respect to the equator, determining the areas on Earth covered by the satellites. Larger \( \theta \) means more coverage of the polar regions. \( F \in \{0, 1, \dots, P-1\} \) is a phasing parameter, an integer value that expresses the relative degree separation between neighboring satellites across adjacent planes, given by \( F \times 360^\circ/P\). 

Essentially, a constellation topology is represented as a graph $G(V, E)$, where $V$ is the set of satellites in the constellation and $E$ is the total number of ISLs. In practice, satellites and inter-satellite links may fail due to unpredictable factors such as space debris, solar radiation, and deliberate attacks. Therefore, at time $t$, the number of operational satellites and ISLs in the constellation are $|V(t)|\leq n$ and $|E(t)|\leq 2n$, respectively.

 \textit{Uniform traffic pattern:}  each satellite transmits data packets to all other satellites in the constellation with probability $1/(n-1)$. Compared with localized traffic patterns, this traffic model injects the maximum number of data packets into the constellation per time slot and exhibits global dependence on the connectivity of the constellation, which helps provide a deeper understanding of constellation capacity.

  %Since the probability of ISL link failure is relatively low, we assume that $0<\beta<\alpha$.
 \subsection{Memory-Aware ISL Failure Model}

 Assume time is  slotted, each ISL can be either in the OFF state or the ON state during each time slot. The state of  ISL $i$ evolves independently according to a discrete-time Markov process $\{X_{i}(t),t>0\}$, as illustrated in Fig. \ref{Fig7} \cite{ramakanth2025}. When an ISL is in the ON state ($X_{i}(t)$=1), data packets are successfully transmitted; otherwise, the transmission fails.  The steady-state probabilities of the $i$-th ISL being in states “0” and “1” are given by
\begin{equation}
\pi_1^{(i)} = \frac{\beta}{\alpha + \beta}, \quad
\pi_0^{(i)} = \frac{\alpha}{\alpha + \beta},
\label{eq:steady_state_probs}
\end{equation} 
where $\alpha$ and $\beta$ are the failure probability and the recovery probability, respectively.
\begin{figure}[!h]
\centering
\begin{tikzpicture}[>=Stealth, node distance=3cm, auto]
    % States
    \node[state] (1) {$1$}; % 连通 (On)
    \node[state, left=of 1] (0) {$0$}; % 断开 (Off)

    % Transitions
    \path[->]
    % 自环箭头放在状态上方
    (1) edge[loop above] node {$1-\alpha$} () % 保持连通
        edge[bend left=20] node {$\alpha$} (0) % 故障：1 -> 0
    (0) edge[loop above] node {$1-\beta$} () % 保持断开
        edge[bend left=20] node {$\beta$} (1); % 修复：0 -> 1
\end{tikzpicture}
\caption{Two-state discrete Markov chain  describing the ISL state evolution. State 0 corresponds to an OFF,  while state 1 corresponds
to an ON.}
\label{Fig7}
\end{figure}

Some ISLs may experience failures, they remain in either state 0 or state 1 over multiple consecutive time slots, indicating that the evolution of ISL states exhibits memory. In other words, the state at time slot $t$ is correlated with the state at time slot $t-k$, where $t<k$. To capture the correlation, we define the belief of an ISL as the probability that the ISL is ON at time slot $t$ given its historical state at time slot $t-k$. Based on the aforementioned Markov process, the belief of ISL $i$ is given by

\begin{align}
\gamma_i(t) &= \mathbb{P}(\text{ISL } i \text{ is ON}|\text{ History state at time slot $t-k$}) \notag \\
&= \mathbb{P}\bigl(X_i(t) = 1 \bigm| X_i(t - k) = s_i\bigr), \:\:s_{i}\in \{0,1\}.
\end{align}

The $k$-step transition probabilities of the Markov process depicted in Fig. \ref{Fig7} are given by
\begin{align}
p_{00}^{(i)}(k) &= \frac{\alpha + \beta \mu^k}{\alpha + \beta}, \quad & p_{01}^{(i)}(k) &= \frac{\beta - \beta \mu^k}{\alpha + \beta}, \\
p_{10}^{(i)}(k) &= \frac{\alpha - \alpha \mu^k}{\alpha + \beta}, \quad & p_{11}^{(i)}(k) &= \frac{\beta + \alpha \mu^k}{\alpha + \beta}.
\end{align}
where $\mu=1 - \alpha - \beta$  is the \emph{memory parameter} of ISLs. 

The larger $\lvert \mu \rvert$, the higher the predictability of the ISL$_z$ state; conversely, the smaller $\lvert \mu \rvert$, the lower the predictability. When the memory parameter $\mu = 0$, the ISLs have no memory (i.i.d. in time), and when $\mu \to 1$, the ISLs remain static once realized. 
%We assume that $\mu \ge 0$, i.e., the links exhibit positive memory. 
%This assumption ensures that an ISL $i$ in the ON state is more likely to remain ON after $k$ time slots than an ISL in the OFF state turning ON after $l$ time slots, i.e.,
%\[
%p_{01}^{(i)}(l)< p_{11}^{(i)}(k), \quad \forall \:i\in E.
%\]

For analytical simplicity, we assume that the evolution of all ISLs is homogeneous.  Given a fixed $k$, a larger $\mu$ results in lower overhead for maintaining ISLs. This failure model has been used to describe the state evolution of wireless channels in opportunistic communication systems \cite{7953675}.

\section{Main Results \label{SECII}}
We make the following (nearly) minimal set of assumptions:

(A.1)  The transmission rate of each ISL is  1 Gbps.

(A.2)  In the packet multi-hop routing process, the overhead of a single-hop transmission is $\sigma <\infty$   \text{Gbps}. The sources of overhead include establishing/maintaining routing paths, sharing ISLs, queuing, and avoiding/handling congestion, etc. $\sigma$ is called as the \emph{contention overhead}.

(A.3) The overhead of establishing and maintaining each ISL is $\omega < \infty$ Gbps. This overhead depends on the behavior of ISL failures and the memory  of the ISL state evolution.  $\omega$ is called as the \emph{consensus overhead}.

(A.4)  The contention overhead and  consensus  overhead are mutually independent.

When the global constellation state information is perfectly known, all protocols and strategies incur zero overhead. Under this condition, the upper bound of the constellation capacity is achievable. Applying the bisection argument, the constellation capacity under a uniform traffic pattern is given as follows.
\begin{lemma}
Under the uniform traffic pattern, the upper bound of the constellation capacity is $8\sqrt{n}$\: Gbps. This upper bound can be achieved by employing shortest-hop routing.
\label{Th1}
\end{lemma}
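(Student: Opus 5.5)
We will prove the bound by a bisection (cut) argument on the $\sqrt{n}\times\sqrt{n}$ torus. We then show achievability by a load computation for shortest-hop routing on the same structure. Throughout we take $P=M=\sqrt{n}$; this is the symmetric torus that maximizes the bisection width for a given $n$. Each satellite has degree four, and by (A.1) every ISL carries $1$ Gbps in each direction.

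\emph{Upper bound.} Let each satellite inject traffic at rate $\lambda$, spread uniformly over the other $n-1$ destinations. Cut the torus into two halves of $n/2$ satellites each, using two parallel "seams" perpendicular to one torus dimension. Because of the wrap-around, each seam crosses $\sqrt{n}$ ISLs, so the cut contains $2\sqrt{n}$ links, or $4\sqrt{n}$ directed links.

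The traffic that must cross the cut in one direction is $(n/2)\cdot\lambda\cdot\frac{n/2}{n-1}\approx \lambda n/4$. This must not exceed the directed cut capacity, so we get $\lambda n/4 \le 2\sqrt{n}$ in each direction. This gives $\lambda \le 8/\sqrt{n}$ per node. The quantity called constellation capacity is then interpreted (following \cite{1642730}) as the total cross-bisection throughput $\lambda n$ normalized as in that reference, which yields the $8\sqrt{n}$ Gbps figure. The main bookkeeping obstacle is to fix exactly which throughput normalization (total injected rate versus cut-traversing rate, with or without both directions) reproduces the constant $8$ rather than $4$ or $16$. I would settle this by following the convention of \cite{1642730} and checking it against the counting above.

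\emph{Achievability.} Under shortest-hop routing with ties split evenly (for example, dimension-order routing randomized over the two orders), symmetry of the torus and uniformity of the traffic imply that every directed ISL carries the same load. That common load is the total hop demand divided by the number of directed links:
$$L=\frac{\lambda n\,\bar d}{4n},$$
where $\bar d\approx 2\cdot\frac{\sqrt{n}}{4}=\sqrt{n}/2$ is the mean shortest-path length on the torus (each coordinate has mean wrap distance $\sqrt{n}/4$). Setting $L\le 1$ gives $\lambda \le 8/\sqrt{n}$, which matches the cut bound. So shortest-hop routing attains the bound, up to the $(n-1)$ versus $n$ correction, which vanishes asymptotically.

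The delicate step is the claim that the load is exactly balanced. This needs vertex- and edge-transitivity of the torus together with a symmetric tie-breaking rule, so that every directed link in a given dimension sees identical traffic and the two dimensions share it equally. Zero protocol overhead is assumed, as stated before the lemma, so no further losses need to be accounted for.
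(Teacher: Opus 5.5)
Your proposal is correct. It is essentially the bisection argument the paper invokes when it cites \cite{1642730,sun2003capacity}: a cut through the $2\sqrt{n}$ wrap-around seam links of the $\sqrt{n}\times\sqrt{n}$ torus gives the bound, and a symmetric load computation for shortest-hop routing with mean distance $\approx\sqrt{n}/2$ gives achievability. The normalization issue you flag is not an obstacle. With $1$ Gbps per direction, your inequality $\lambda n/4\le 2\sqrt{n}$ already says that the total injected throughput $n\lambda$ is at most $8\sqrt{n}$; this is not the cross-bisection rate, which is only about half of $n\lambda$. Your achievability condition $\lambda\bar{d}/4\le 1$ gives exactly the same value, so you do not need to appeal to the reference's convention.
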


\begin{IEEEproof}The proof can be found in \cite{1642730,sun2003capacity}.

\end{IEEEproof}

\begin{remark}
Apart from the shortest-hop routing, in order to achieve the capacity upper bound, the constellation topology must satisfy a symmetric geometric condition: $|M-P| < \epsilon$, where $\epsilon >0$.  In other words, the difference between the number of orbital planes and the number of satellites per plane should not be too large.
\label{RE1}
\end{remark}

From Lemma \ref{Th1},  the capacity gain  from increasing the number of satellites scales as $O(1/\sqrt{n})$. This implies that if the  overhead grows with the number of satellites at a rate greater than $\Omega(1/\sqrt{n})$, the constellation capacity will  be exhausted. In particular, in failure scenarios, the \emph{consensus overhead} associated with maintaining ISLs increases significantly. Therefore, we use the ratio of constellation capacity to total overhead as a metric to quantify the change in effective constellation capacity as the number of satellites increases.

Next, we derive a lower bound on the \emph{consensus overhead} from an information-theoretic perspective.
\subsection{Lower Bound on Consensus Overhead\label{IV_A}}

\begin{proposition}
Assuming that the ISL state evolution follows the two-state discrete-time Markov chain, the average  \emph{consensus overhead} over $k$ time slots is lower bounded by
\begin{equation}
\omega\geq\bar{H}_k(\alpha,\beta) = \frac{h\left( \pi_{1} \right) + (k-1) \left( \pi_{1} h(\alpha) + \pi_{0} h(\beta) \right)}{k},
\label{eqP1}
\end{equation}
where $h(x) = -x\log_2x - (1-x)\log_2(1-x)$ and $\pi_{0}$ and $\pi_{1}$ are given in (\ref{eq:steady_state_probs}). 

\label{P1}
\end{proposition}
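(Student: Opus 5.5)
The plan is to read the consensus overhead per ISL as the number of bits needed to describe the ISL state sequence to the maintaining entity, and then invoke Shannon's source coding theorem. Over one maintenance window of $k$ slots, the states $X_i(1),\dots,X_i(k)$ of a given ISL must be conveyed so that every satellite (or the controller) agrees on the topology. Any such agreement protocol is a lossless description of this random block. Its expected length per window is therefore at least the joint entropy $H(X_i(1),\dots,X_i(k))$. Dividing by $k$ gives a per-slot lower bound on $\omega$. Under (A.1), one bit per slot is identified with the unit of link capacity, so the bound can be read in the same units as $\omega$.

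First, I assume the chain starts in stationarity, which is the natural regime for a long-running constellation. Since all ISLs are homogeneous, I drop the index $i$. Second, I expand the joint entropy with the chain rule and the Markov property:
\begin{equation*}
H(X(1),\dots,X(k)) = H(X(1)) + \sum_{t=2}^{k} H\bigl(X(t)\bigm| X(t-1)\bigr).
\end{equation*}
Third, I evaluate each term. By stationarity, $X(1)$ is Bernoulli$(\pi_1)$, so $H(X(1))=h(\pi_1)$. Each conditional term equals
\begin{equation*}
\pi_1 H\bigl(X(t)\mid X(t-1)=1\bigr) + \pi_0 H\bigl(X(t)\mid X(t-1)=0\bigr) = \pi_1 h(\alpha) + \pi_0 h(\beta),
\end{equation*}
because from state 1 the next state is 0 with probability $\alpha$, and from state 0 it is 1 with probability $\beta$. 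Summing and dividing by $k$ gives exactly $\bar H_k(\alpha,\beta)$ in (\ref{eqP1}).

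The remaining, more conceptual step is to justify that the protocol cannot do better than the joint entropy. Two possible shortcuts must be ruled out: exploiting the previous window's information, and sharing information across ISLs. Sharing across ISLs gives no gain because ISLs evolve independently, so the joint entropy is additive over links. For the previous window, one could condition on the state at the end of the prior window, which would yield only the smaller entropy rate. I will therefore state the model explicitly: each window is refreshed without relying on prior state, and the first slot in the window has marginal $\pi$. That is what the $h(\pi_1)$ term encodes. A remark will note that, as $k\to\infty$, the bound decreases to the entropy rate $\bar H_\infty=\pi_1 h(\alpha)+\pi_0 h(\beta)$. To confirm the monotonicity, I will show $h(\pi_1)\ge \pi_1 h(\alpha)+\pi_0 h(\beta)$, since conditioning reduces entropy. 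This monotonicity is what the later discussion on the maintenance period relies on.

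I expect this modeling justification, rather than the computation, to be the main obstacle. In particular, I will need to argue convincingly why the initial state of each window is charged at its marginal entropy.
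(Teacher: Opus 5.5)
Your proposal is correct and is essentially the paper's proof: the chain rule plus the Markov property give $H(X_1)+(k-1)H(X_2\mid X_1)$, the stationary marginals give $h(\pi_1)$ and $\pi_1 h(\alpha)+\pi_0 h(\beta)$, and dividing by $k$ yields $\bar{H}_k(\alpha,\beta)$. The paper simply asserts that joint entropy measures the overhead. Your source-coding justification makes this precise, but it needs the descriptions to be uniquely decodable (e.g.\ prefix-free when concatenated across windows and links), because one-shot codes without that constraint can beat entropy. Your explicit assumption that each window starts without prior-state information states what the paper leaves implicit.
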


\begin{IEEEproof}To maintain the constellation structure, it is essentially necessary to eliminate the state uncertainty accumulated over $k$ time slots for each ISL. Therefore, the joint entropy can be used to measure the total consensus overhead for maintaining each ISL $i$ over $k$ time slots. Applying the chain rule for entropy, we have
\begin{align}
H(X_1, X_2, \dots, X_k) &= H(X_1) + H(X_2 \mid X_1) \notag \\
&\quad + H(X_3 \mid X_1, X_2) + \dots \notag \\
&\quad + H(X_k \mid X_1, X_2, \dots, X_{k-1})
\label{eq:joint_entropy_chain_rule}
\end{align}

Using the Markov property of ISL state evolution, (\ref{eq:joint_entropy_chain_rule}) is simplified to
\begin{equation}
H(X_1, \dots, X_k) = H(X_1) + (k-1)H(X_2|X_1).
\label{eq6}
\end{equation}

By incorporating the steady-state probabilities of ISL state evolution (see (\ref{eq:steady_state_probs})), we obtain
\begin{align}
H(X_2 \mid X_1) 
&= \sum_{j \in \{0,1\}} \pi_j H(X_2 \mid X_1 = j) \notag \\
&= \pi_1 h(\alpha) + \pi_0 h(\beta),
\label{eq:cond_entropy}
\end{align}
where $h(x) = -x\log_2x - (1-x)\log_2(1-x),\: x>0$.

The prior uncertainty $H(X_{1})$ associated with each ISL is given by
\begin{equation}
H(X_1) = h\bigl(\pi_1\bigr).
\label{eq:prior_entropy}
\end{equation}

Based on equations (\ref{eq6}), (\ref{eq:cond_entropy}) and (\ref{eq:prior_entropy}), the average consensus overhead is lower bounded by $$\omega\geq \frac{h\left( \pi_{1} \right) + (k-1) \left( \pi_{1} h(\alpha) + \pi_{0} h(\beta) \right)}{k}.$$
\end{IEEEproof}

Proposition \ref{P1} derives an information-theoretic lower bound on the \emph{consensus overhead} required for maintaining each ISL. This bound is jointly determined by the link-state evolution behavior and the maintenance period $k$. Next, we investigate the impact of different parameters on $\bar{H}_k(\alpha,\beta)$.

\begin{corollary}
Given the link-state evolution dynamics, i.e., fixed $\alpha$  and $\beta$, the following results hold: 
\begin{itemize}
    \item When $k = 1$, $\omega \geq  h\left( \pi_{1} \right)$.
    \item When $k \to +\infty$, $$\omega \geq \bar{H}_{\infty}(\alpha, \beta) =  \frac{\beta}{\alpha+\beta} h(\alpha) + \frac{\alpha}{\alpha+\beta} h(\beta)>0.$$
\end{itemize}
where $\pi_{1}=\frac{\beta}{\alpha+\beta}$.
\label{Coro1}
\end{corollary}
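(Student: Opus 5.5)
The plan is to specialize the bound $\bar{H}_k(\alpha,\beta)$ of Proposition \ref{P1} at $k=1$, pass to the limit $k\to+\infty$, and then establish strict positivity of the limit separately.

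\emph{Case $k=1$.} Substituting $k=1$ into (\ref{eqP1}) kills the factor $(k-1)$, so the second term vanishes and the denominator equals one. This gives $\omega\geq h(\pi_1)$ immediately. This matches the chain-rule decomposition (\ref{eq6}), in which only the prior entropy $H(X_1)$ survives.

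\emph{Case $k\to+\infty$.} Write
\[
\bar{H}_k(\alpha,\beta)=\frac{h(\pi_1)}{k}+\frac{k-1}{k}\bigl(\pi_1 h(\alpha)+\pi_0 h(\beta)\bigr).
\]
Since $h(\pi_1)\le 1$ is bounded, the first term tends to zero, and $(k-1)/k\to 1$. Hence the limit is $\pi_1 h(\alpha)+\pi_0 h(\beta)$, the entropy rate $H(X_2\mid X_1)$ from (\ref{eq:cond_entropy}). Substituting $\pi_1=\beta/(\alpha+\beta)$ and $\pi_0=\alpha/(\alpha+\beta)$ from (\ref{eq:steady_state_probs}) yields the stated expression. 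Because the inequality $\omega\ge\bar{H}_k$ holds for every $k$, it persists in the limit.

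It is also worth noting that $\bar{H}_k$ is nonincreasing in $k$. This follows because $\bar{H}_k$ is a weighted average of $h(\pi_1)$ (weight $1/k$) and $H(X_2\mid X_1)$ (weight $(k-1)/k$), and $H(X_2\mid X_1)\le H(X_2)=h(\pi_1)$ since conditioning reduces entropy. So $\bar{H}_\infty$ is in fact the infimum over $k$. This supports the paper's later remark that enlarging the maintenance period $k$ cannot push the overhead below $\bar{H}_\infty$.

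\emph{Strict positivity.} This is the only step requiring care, since $h$ vanishes at $0$ and $1$. I would assume the nondegenerate regime $\alpha,\beta\in(0,1)$: a positive failure probability, a positive recovery probability, and neither equal to one, so that the chain is not deterministic. Then $\pi_1,\pi_0>0$ and $h(\alpha),h(\beta)>0$, so the sum is strictly positive. I expect the main obstacle to be stating this hypothesis explicitly, since it is implicit in the paper.

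The degenerate cases show why the hypothesis is needed. If $\alpha=\beta=1$, then $\mu=-1$, the link alternates deterministically, and the limit is $0$. If $\alpha=0$ and $\beta\in(0,1)$, then $\pi_1=1$ and $\pi_0=0$, and the limit is again $0$. I would therefore note in the proof that positivity holds whenever $\alpha,\beta\in(0,1)$. It suffices, more weakly, that at least one of $\pi_1h(\alpha)$ or $\pi_0h(\beta)$ is nonzero, i.e.\ that some recurrent state has a nondeterministic transition.
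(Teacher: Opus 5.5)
Your proposal is correct and takes the same approach as the paper, whose proof consists only of setting $k=1$ and letting $k\to+\infty$ in (\ref{eqP1}). Your two additions supply details the paper asserts without proof: the monotonicity of $\bar{H}_k$ in $k$ (a weighted average of $h(\pi_1)$ and $H(X_2\mid X_1)\le h(\pi_1)$), and the nondegeneracy hypothesis $\alpha,\beta\in(0,1)$ needed for strict positivity of $\bar{H}_\infty$, which the paper leaves implicit by defining $h(x)$ only for $x>0$ and working in the region $\mathcal{A}$.
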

\begin{IEEEproof}
Two cases are considered separately: setting \(k = 1\) and  \(k \to +\infty\) in (\ref{eqP1}). This completes the proof of Corollary \ref{Coro1}.

\end{IEEEproof}

When $k=1$, the ISL is maintained in each time slot, without exploiting the prior state information of the ISL. In this case, the minimum consensus overhead equals the entropy of ISL  in the “on” state. In contrast, as $k \to +\infty$, the fixed overhead associated with the initial confirmation of the ISL state becomes negligible. In this case, the \emph{consensus overhead} depends solely on the ISL state changes frequently.
Furthermore, it can be  observed that when the ISL experiences failures,  we have 
\begin{equation}
h\left( \pi_{1} \right)\geq\bar{H}_{\infty}(\alpha, \beta), 
\end{equation}
where $0<\alpha+\beta\leq 1$. The equality holds if and only if $\alpha + \beta = 1$.

Increasing the ISL maintenance period $k$ can reduce the \emph{consensus overhead}, but it cannot fall below $\bar{H}_{\infty}(\alpha, \beta)$.  Fig.~\ref{fig:2} shows the lower bound of the consensus overhead as a function of the failure probability $\alpha$ for $\beta = 0.25, 0.5,$ and $0.75$.  When the failure-recovery probability $\beta > 0.5$ and the \emph{memory parameter} of the ISL $\mu=1-\alpha-\beta > 0$, the consensus overhead required under the $k=1$ and $k \to +\infty$ strategies becomes nearly identical. In other words, maintaining the ISL in each time slot (i.e., $k=1$) is the optimal strategy, since $h\left( \pi_{1} \right) \approx \bar{H}_{\infty}(\alpha, \beta)$. In contrast, when $\alpha + \beta > 1$, increasing the maintenance period $k$ can reduce the \emph{consensus overhead} compared with $k=1$, since the predictability of the ISL state increases, even though the ISL state changes frequently.
 \begin{figure*}[!t]
    \centering
    % 子图(a)
    \begin{subfigure}[b]{0.32\textwidth}
        \centering
        \includegraphics[width=\textwidth]{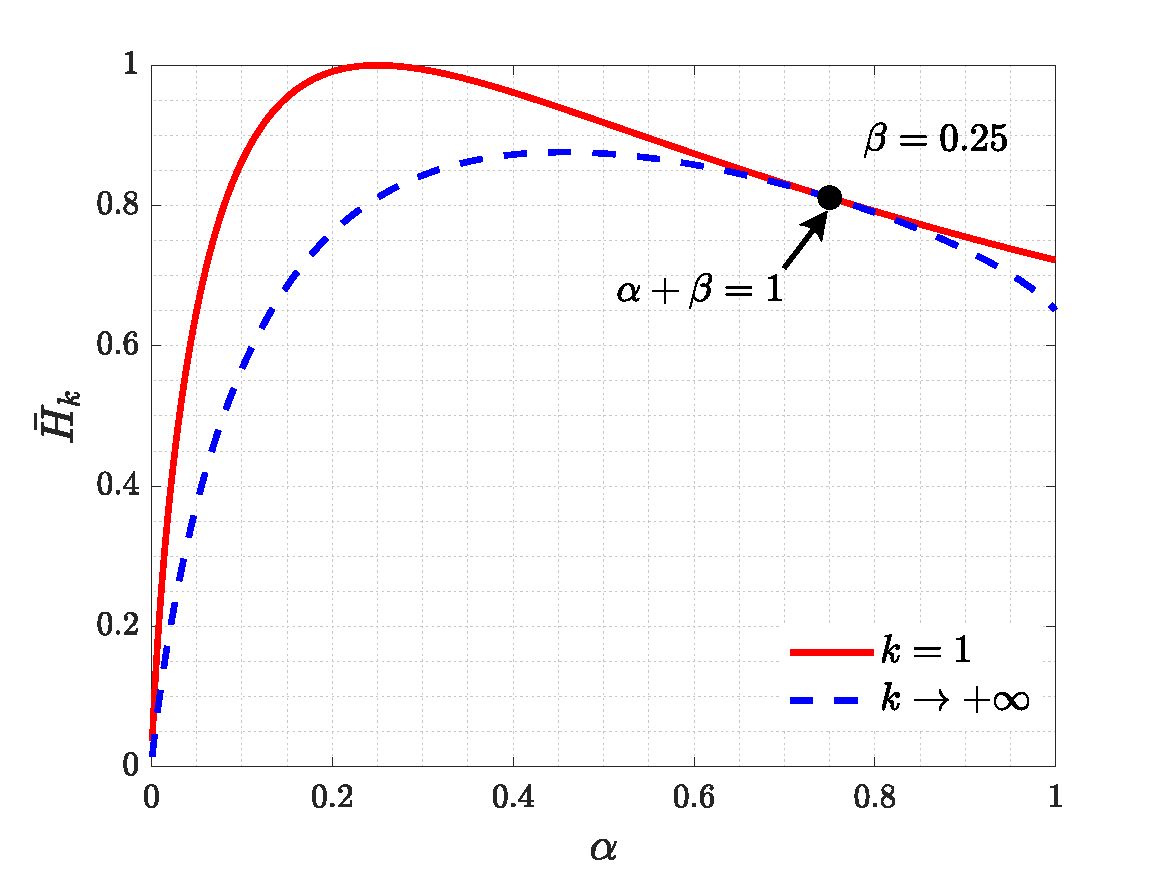}
        \caption{$\beta = 0.25$}
        \label{fig:sub1}
    \end{subfigure}
    \hfill
    % 子图(b)
    \begin{subfigure}[b]{0.32\textwidth}
        \centering
        \includegraphics[width=\textwidth]{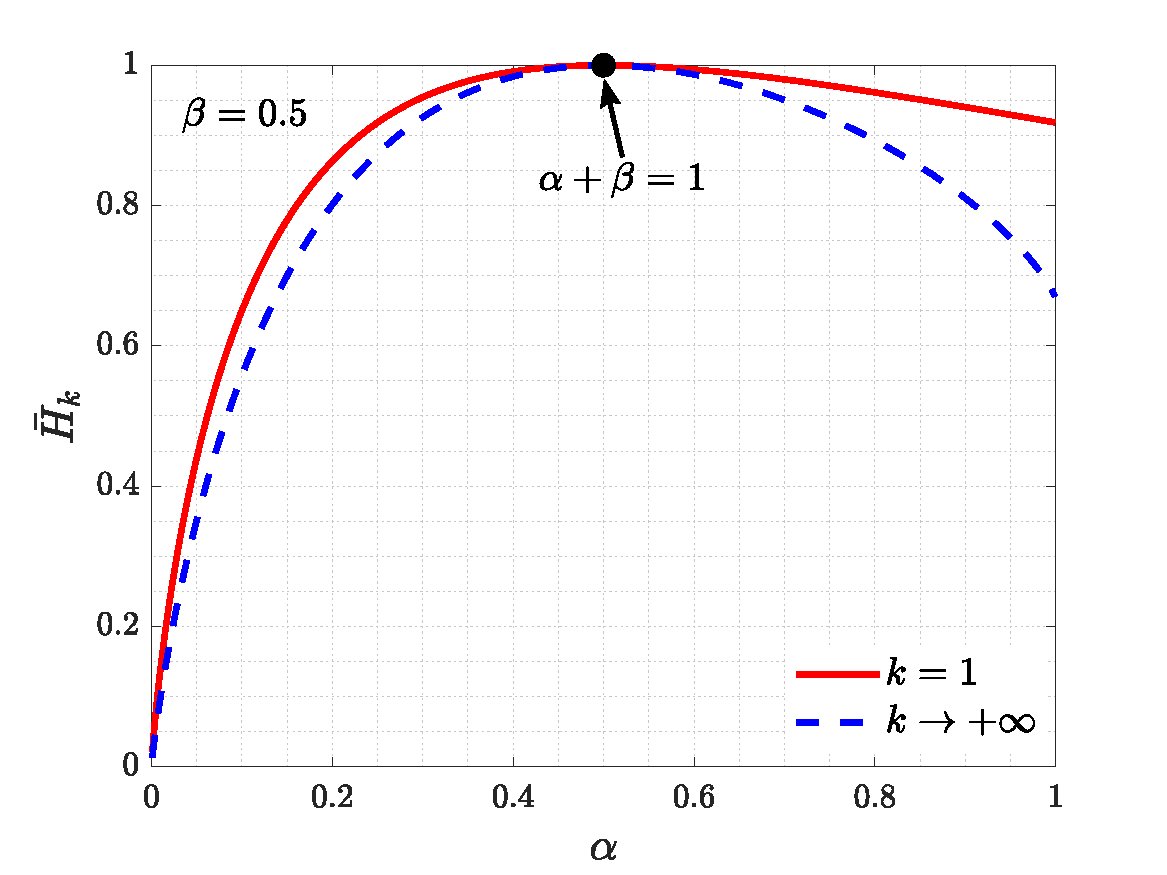}
        \caption{$\beta = 0.5$}
        \label{fig:sub2}
    \end{subfigure}
    \hfill
    % 子图(c)
    \begin{subfigure}[b]{0.32\textwidth}
        \centering
        \includegraphics[width=\textwidth]{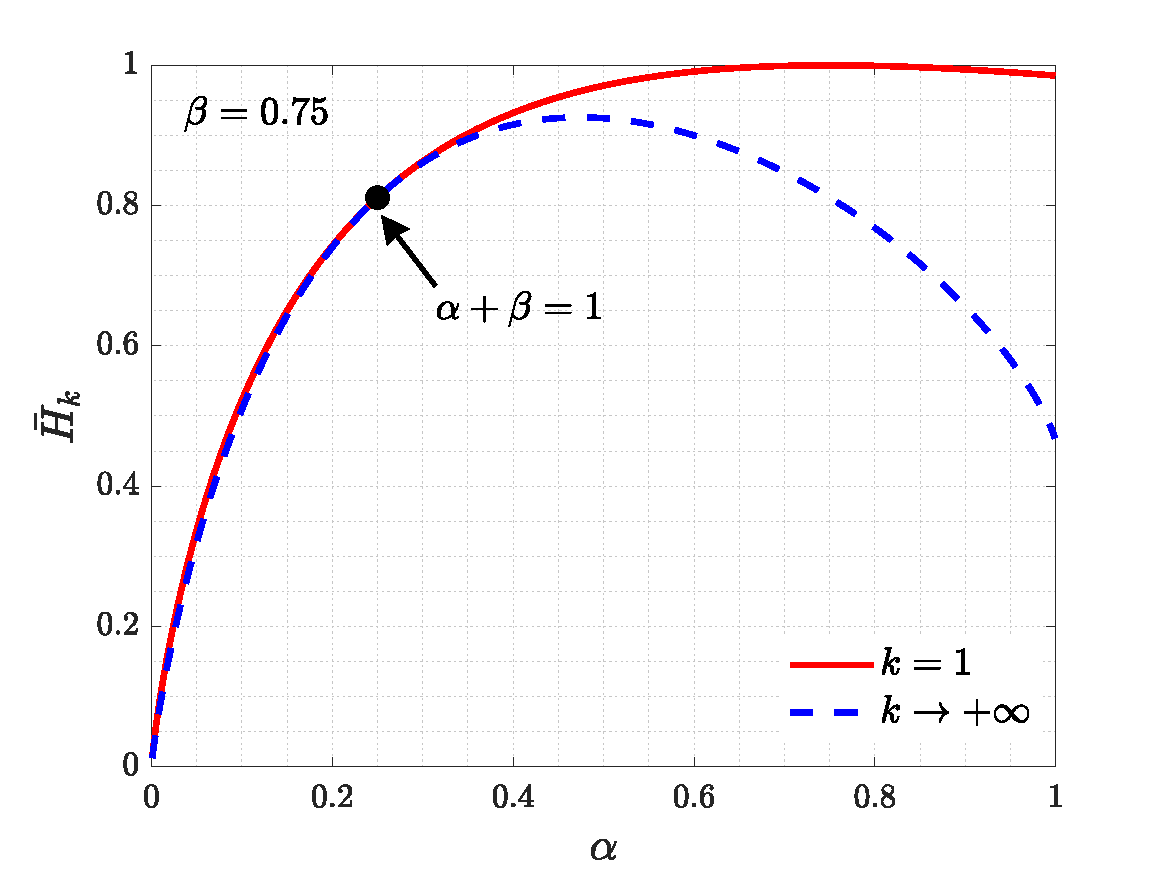}
        \caption{$\beta = 0.75$}
        \label{fig:sub3}
    \end{subfigure}

    \caption{Lower bound of \emph{consensus overhead} as a function of failure probability $\alpha$ under different recovery probabilities $\beta$. }
    \label{fig:2}
\end{figure*}

When $k = c < +\infty$, the relative magnitudes of the failure probability $\alpha$ and the recovery probability $\beta$ affect the lower bound of the \emph{consensus overhead}. The evolutionary behavior of ISLs is classified into five regions. Next, we focus on analyzing the lower bounds of consensus overhead in regions I–IV, as shown
in Fig.\ref{Cor2_Fig2}.  The evolutionary behavior of ISL in each region is as follows:

\emph{Region I}: The failure probability and recovery probability of ISL are both far below 1. The probability of ISL transitioning from the ON (OFF) state to the OFF (ON) state is low, indicating that the state of ISL is stable.

\emph{Region II}: The failure probability of ISL is low, while the recovery probability is high. The probability of ISL transitioning from the  OFF state to the  ON state is  high.

\emph{Region III}: Both the failure probability and recovery probability of ISL are high. The state of ISL transitions frequently.

\emph{Region IV}: The failure probability of ISL is high, while the recovery probability is low. Once a failure occurs, the ISL is difficult to recover.
\begin{figure}[!htbp]
    \centering
    \includegraphics[scale=0.7]{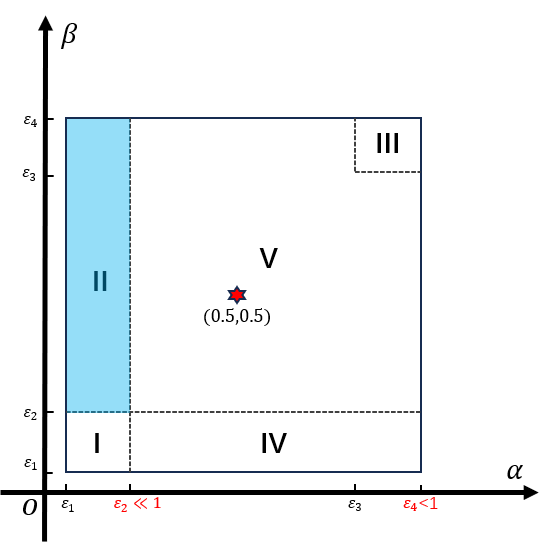}
    \caption{Classification of ISL state evolution behaviors for $0<\epsilon_{1}<\epsilon_{2}\ll\epsilon_{3}<\epsilon_{4}<1$. The lower bound of the \emph{consensus overhead} is maximized when $\alpha = \beta = 0.5$.}
    \label{Cor2_Fig2}
\end{figure}
%\vspace{-0.2cm}

Each region can be represented as:
\[
\begin{aligned}
\mathcal{A}_{1} &= \{ (\alpha, \beta) \mid \epsilon_{1} \leq \alpha \leq \epsilon_{2},\ \epsilon_{1} \leq \beta \leq \epsilon_{2} \},\\[2mm]
\mathcal{A}_{2} &= \{ (\alpha, \beta) \mid \epsilon_{1} \leq \alpha \leq \epsilon_{2},\ \epsilon_{2} < \beta \leq \epsilon_{4} \},\\[1mm]
\mathcal{A}_{3} &= \{ (\alpha, \beta) \mid \epsilon_{3} \leq \alpha \leq \epsilon_{4},\ \epsilon_{3}\leq \beta \leq \epsilon_{4} \},\\[1mm]
\mathcal{A}_{4} &= \{ (\alpha, \beta) \mid \epsilon_{2} < \alpha \leq \epsilon_{4},\ \epsilon_{1} < \beta < \epsilon_{2} \},\\[1mm]
\mathcal{A}_{5} &= \mathcal{A} - \bigcup_{i=1}^{4} \mathcal{A}_{i},
\end{aligned}
\]
where $\mathcal{A}=\{ (\alpha, \beta) \mid 0<\epsilon_{1} \leq \alpha,\beta \leq \epsilon_{4} <1\}$.

\begin{corollary}
Given the maintenance period of ISLs , i.e., $k=c>1 $, the following results hold: 
\\

\centering
\small
\renewcommand{\arraystretch}{1}
\begin{tabularx}{\columnwidth}{@{} >{\centering\arraybackslash}X *{3}{>{\centering\arraybackslash}X} @{}}
\toprule
\textbf{Region ID} & \textbf{I \& III} & \textbf{II} & \textbf{IV} \\ 
\midrule
$\bar{H}_{k}$ & $\frac{1}{k}$ & $\frac{k-1}{k} h(\alpha)$ & $\frac{k-1}{k} h(\beta)$ \\
\bottomrule
\end{tabularx}
\label{Coro2}
\end{corollary}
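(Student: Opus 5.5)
The statement is an asymptotic (approximate) characterization, so the plan is to start from the exact expression of Proposition~\ref{P1},
\begin{equation*}
\bar{H}_k(\alpha,\beta)=\frac{1}{k}h(\pi_1)+\frac{k-1}{k}\bigl(\pi_1 h(\alpha)+\pi_0 h(\beta)\bigr),\qquad \pi_1=\frac{\beta}{\alpha+\beta},
\end{equation*}
and identify in each region which of the two terms dominates. Since $\epsilon_1<\epsilon_2\ll\epsilon_3<\epsilon_4<1$, the table entries should be read as leading-order behaviour. Concretely, $\epsilon_2$ and $\epsilon_1$ are treated as small quantities, and terms of order $h(\epsilon_2)$ (which tend to zero as $\epsilon_2\to0$) are dropped. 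I would state this interpretation explicitly at the start, since the table is not literally an identity.

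\emph{Regions I and III.} In Region I both $\alpha,\beta\in[\epsilon_1,\epsilon_2]$ are small, so $h(\alpha),h(\beta)\to0$. The stationary probability $\pi_1$, however, can remain near $1/2$ (for instance when $\alpha\approx\beta$), so $h(\pi_1)\approx 1$ in the worst case. Bounding $h(\pi_1)\le 1$ and $\pi_1h(\alpha)+\pi_0h(\beta)\le h(\epsilon_2)$ gives $\bar H_k\le \frac1k+\frac{k-1}{k}h(\epsilon_2)\approx\frac1k$.

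In Region III both parameters lie in $[\epsilon_3,\epsilon_4]$. Taken in the regime the figure highlights, namely near $\alpha=\beta=1/2$ or with $\epsilon_3,\epsilon_4$ close to $1$, this gives $\mu\approx 0$ or a symmetric chain. There $\pi_1\approx1/2$ and $h(\pi_1)\approx1$, while the conditional entropy term is bounded. The honest outcome of this case is that $\frac1k$ is the contribution of the initial-state term. I would present the entry as that contribution, with the $\frac{k-1}{k}$ term controlled separately.

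This is the \textbf{main obstacle}: for $\alpha=\beta=1/2$, $h(\alpha)=1$ and the second term does not vanish. I would therefore restrict Region III to $\epsilon_3,\epsilon_4\to1$, where $h(\alpha),h(\beta)\to0$ as well, which makes the entry $\frac1k$ genuinely leading order.

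\emph{Region II.} Here $\alpha\le\epsilon_2$ is small and $\beta>\epsilon_2$ is large relative to it. Then $\pi_0=\alpha/(\alpha+\beta)\to0$ and $\pi_1\to1$, which gives:
\begin{itemize}
\item $h(\pi_1)\to0$, so the first term is negligible;
\item $\pi_0h(\beta)\le\pi_0\to0$;
\item $\pi_1h(\alpha)\approx h(\alpha)$.
\end{itemize}
Hence $\bar H_k\approx\frac{k-1}{k}h(\alpha)$. To make this precise, I would bound the error by $\frac1k h(\pi_0)+\pi_0(1+h(\alpha))$, which is $O(h(\epsilon_2/\beta))$.

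\emph{Region IV.} This case is symmetric to Region II under swapping $\alpha\leftrightarrow\beta$, which interchanges $\pi_0$ and $\pi_1$. Here $\pi_1\to0$ and $h(\pi_1)\to0$, leaving $\bar H_k\approx\frac{k-1}{k}\pi_0h(\beta)\approx\frac{k-1}{k}h(\beta)$. I would invoke this symmetry rather than redo the computation.
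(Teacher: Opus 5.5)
Your route is the paper's: substitute region-wise approximations into the exact formula of Proposition~\ref{P1} and keep the dominant term. You are more careful than the paper in three places:
\begin{itemize}
\item \emph{Region I.} You correctly note that $\frac1k$ needs $\pi_1\approx\frac12$ and is otherwise only an upper bound.
\item \emph{Region II.} Your bound $\pi_0h(\beta)\le\pi_0$ is cleaner than the paper's unneeded $h(\beta)\approx0$.
\item \emph{Region III.} You rightly see that the paper's ``same approach'' silently requires $h(\alpha),h(\beta)\approx0$, i.e.\ $\epsilon_3,\epsilon_4\to1$. At the paper's own Region III point $\alpha=\beta=0.9$ with $k=10$, one gets $\bar H_k=0.1+0.9\,h(0.9)\approx0.52$, not $0.1$. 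So your restriction to $\epsilon_3,\epsilon_4\to1$ is the correct repair.
\end{itemize}
Two small corrections in Region III. In that regime $\pi_1\to\frac12$ automatically and $\mu\to-1$, not $0$. Also, the detour through $\alpha=\beta=\frac12$ is beside the point, since the paper places that point in Region V.

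The step that does not hold in the sense you claim is the Region II bullet ``$h(\pi_1)\to0$, so the first term is negligible.'' The retained term $\frac{k-1}{k}h(\alpha)$ also tends to zero, so a leading-order claim needs \emph{relative} negligibility, and that fails. As $\alpha\to0$ with $\beta$ fixed, $h(\pi_0)=\frac1\beta h(\alpha)(1+o(1))$, while $\pi_0h(\beta)=o(h(\alpha))$. Hence
\[
\bar H_k=\Bigl(\frac{k-1}{k}+\frac{1}{k\beta}\Bigr)h(\alpha)\bigl(1+o(1)\bigr).
\]
Your error bound $O(h(\epsilon_2/\beta))$ is of the same order as the main term, so it only certifies absolute closeness. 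Under your own convention of dropping $O(h(\epsilon_2))$ terms, the Region II entry would itself be dropped.

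Concretely, take the paper's simulation point $\alpha=10^{-5}$, $\beta=0.8$. The exact $\bar H_k$ exceeds $\frac{k-1}{k}h(\alpha)$ by about $19\%$ for $k=10$, and by a factor of about $2.3$ for $k=2$. This matters because downstream quantities such as $n_*\approx1/\bar H_k$ in Corollary~\ref{Coro_TH2} depend on the relative value.

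To close the gap, either keep the $\frac{1}{k\beta}$ correction or add the hypothesis $k\beta\gg1$. Region IV likewise needs $k\alpha\gg1$, since there $h(\pi_1)=\frac1\alpha h(\beta)(1+o(1))$. The paper's proof drops the same term without comment, so this looseness is inherited. Your ``to make this precise'' step does not remove it.
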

\begin{IEEEproof} When $\alpha \approx \beta$, we have $h(\pi_1) \approx 1$. Based on this fact, we derive the lower bound on the \emph{consensus overhead} for each region.

 For Region I, when $\alpha \approx \beta \ll 1$, we obtain
\begin{equation}
h(\alpha) \approx 0 \quad \text{and} \quad h(\beta) \approx 0.
\label{C2_eq11}
\end{equation}

Substituting (\ref{C2_eq11}) into (\ref{eqP1}) yields 
$\omega \geq \frac{1}{k}. $

Using the same approach, the lower bound on the \emph{consensus overhead} for Region III is also $\frac{1}{k}$.

 For Region II, when $0<\alpha \ll 1$, we have
 \begin{equation}
\pi_{0} \approx 0 \quad \text{and} \quad  h(\beta) \approx 0.
\label{C2_eq12}
\end{equation}

Substituting (\ref{C2_eq11}) into (\ref{eqP1}) yields 
$\omega \geq \frac{k-1}{k} h(\alpha). $

By using the similar approach as in Regions  II, we can complete the proof for Regions  IV.

\end{IEEEproof}

%\begin{figure*}[!t]
%    \centering
%    % 子图(a)
%    \begin{subfigure}[b]{0.32\textwidth}
%        \centering
%        \includegraphics[width=\textwidth]{C2_K_5}
%        \caption{$k=5$}
%        \label{fig:sub1}
%    \end{subfigure}
%    \hfill
%    % 子图(b)
%    \begin{subfigure}[b]{0.32\textwidth}
%        \centering
%        \includegraphics[width=\textwidth]{C2_K_10}
%        \caption{$k=10$}
%        \label{fig:sub2}
%    \end{subfigure}
%    \hfill
%    % 子图(c)
%    \begin{subfigure}[b]{0.32\textwidth}
%        \centering
%        \includegraphics[width=\textwidth]{C2_K_5-10}
%        \caption{$\bar{H}_{5}-\bar{H}_{10}$}
%        \label{fig:sub3}
%    \end{subfigure}
%
%    \caption{Lower bound on the \emph{consensus overhead} versus $\alpha$ and $\beta$ for different fixed $k$.}
%    \label{fig:4}
%\end{figure*}

It can be observed that when the ISL state evolution lies in Regions I and III, selecting a larger maintenance period $k$ can reduce the \emph{consensus overhead}. In other words, the ISL states in these regions exhibit strong memory, i.e., the memory parameter $|\mu|$ is large. In contrast, when the ISL state evolution lies in Region II, increasing $k$ can hardly reduce the \emph{consensus overhead}. In this region, the recovery probability  is relatively high and the \emph{consensus overhead} is mainly incurred to address the ISL uncertainty caused by  failures. Similarly, when the ISL state evolution lies in Region IV, increasing $k$ yields only limited reductions in \emph{consensus overhead}. In this case, once an ISL failure occurs, the recovery probability   is very low, and the overhead is mainly incurred to eliminate the uncertainty regarding whether the ISL has recovered. 

% To provide further insight, Figs.~\ref{fig:4}a and \ref{fig:4}b illustrate the lower bound of the \emph{consensus overhead} with respect to the ISL evolution parameters for $k=5$ and $k=10$, respectively. Fig.~\ref{fig:4}c shows the reduction in overhead when $k$ is increased from 5 to 10.

\subsection{ Upper Bound  of Capacity Scalability}
We integrate the overhead into the constellation capacity analysis and use it to quantify how ISL state evolution and routing affect the capacity scalability. From the overhead perspective, the capacity scalability   is defined as follows.

\begin{definition}
The capacity scalability is defined as the ratio of the constellation capacity under non-failure conditions to the total overhead, which is denoted by $\tau$.
\label{DES}
\end{definition}

Using Lemma \ref{Th1} and Proposition \ref{P1}, the upper bound of the capacity scalability  is derived.

\begin{theorem}
Under the uniform traffic pattern, if the ISLs follow the two-state discrete Markov chain and the maintenance period is $k\geq1$, the upper bound of capacity scalability is given by
\begin{equation}
   \tau \leq \tau(n,k)=\frac{16\sqrt{n}}{1+ \sigma n^{1.5} + 4n\bar{H}_k(\alpha,\beta)},
    \label{TH1}
\end{equation}
where $\alpha$ and $\beta$ are the failure probability and the recovery probability, respectively.  When the total overhead converges to 0,  (\ref{TH1}) degenerates to Lemma \ref{Th1}.
\label{STH1}
\end{theorem}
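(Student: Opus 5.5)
The plan is to bound the numerator and the denominator of the ratio in Definition~\ref{DES} separately. For the numerator, I will use Lemma~\ref{Th1}, which gives non-failure capacity at most $8\sqrt{n}$ Gbps, attained by shortest-hop routing on the symmetric torus of Remark~\ref{RE1}. For the denominator, I need a \emph{lower} bound on the total overhead. By (A.4) it splits additively into a contention part and a consensus part, and I will bound each from below in terms of $n$. Dividing the capacity upper bound by the overhead lower bound then bounds $\tau$ from above.

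\emph{Contention part.} Under the uniform traffic pattern, each satellite is the source of one packet stream per slot, so about $n$ source--destination pairs are active. I will take $M\approx P\approx\sqrt{n}$, as Remark~\ref{RE1} requires. On the resulting $\sqrt{n}\times\sqrt{n}$ torus, the shortest-path hop distance along each dimension to a uniformly chosen destination averages about $\sqrt{n}/4$. The mean hop count is therefore about $\sqrt{n}/2$. No routing can use fewer hops than shortest-path routing, so by (A.2) the contention overhead is at least $\sigma\cdot n\cdot\sqrt{n}/2=\sigma n^{1.5}/2$.

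\emph{Consensus part.} In the non-failure configuration there are $|E|=2n$ ISLs, because each satellite has four links and each link is shared by two satellites. By (A.3) and Proposition~\ref{P1}, each ISL costs at least $\bar{H}_k(\alpha,\beta)$ per slot on average over the maintenance period $k$. Summing over all ISLs gives a consensus overhead of at least $2n\bar{H}_k(\alpha,\beta)$.

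\emph{Assembling the bound.} I will add a baseline normalization term of $1/2$ to the denominator. This term represents the unit protocol cost present even with perfect information, and it keeps the ratio finite when $\sigma,\bar{H}_k\to0$. This gives
\begin{equation*}
\tau\le\frac{8\sqrt{n}}{\tfrac12+\tfrac{\sigma}{2}n^{1.5}+2n\bar{H}_k(\alpha,\beta)}
=\frac{16\sqrt{n}}{1+\sigma n^{1.5}+4n\bar{H}_k(\alpha,\beta)}.
\end{equation*}
As $\sigma\to0$ and $\bar{H}_k\to0$, the overhead terms vanish and the bound is governed purely by the capacity term of Lemma~\ref{Th1}, which recovers the degenerate case in the statement.

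\emph{Expected obstacle.} The main difficulty is justifying the exact constants, in particular the mean hop count $\sqrt{n}/2$ and the additive constant $1$. I would first compute the average torus distance explicitly, as $\sum_{d}\min(d,\sqrt{n}-d)/\sqrt{n}\approx\sqrt{n}/4$ per dimension. The constant $1$ then has to be fixed as a normalization convention, chosen so that the zero-overhead limit matches Lemma~\ref{Th1}. The asymptotic claim $\tau(n,k)=O(1/n)$ does not depend on these constants: it follows directly, since $\sigma n^{1.5}$ dominates the denominator.
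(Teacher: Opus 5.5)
Your route is the paper's: $8\sqrt{n}$ from Lemma~\ref{Th1} in the numerator, and in the denominator the lower bound $\phi(n)\ge \sigma n\overline{L}(n)+2n\bar H_k(\alpha,\beta)$, with $\overline{L}(n)=\sqrt{n}/2+O(1/\sqrt{n})$ and $\bar H_k$ taken from Proposition~\ref{P1}. The step that does not go through is the added $\tfrac12$. From $\phi(n)\ge \tfrac{\sigma}{2}n^{1.5}+2n\bar H_k$ you can only conclude $\tau\le 16\sqrt{n}/(\sigma n^{1.5}+4n\bar H_k)$. Adding a positive constant to the denominator makes the right-hand side smaller, so it is a strictly stronger claim. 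It cannot be inserted as a convention after the inequality is derived; it must be either a genuine component of the overhead or part of the definition of $\tau$. Your physical reading of it (a ``unit protocol cost present even with perfect information'') contradicts the paper's premise that with perfect information all protocols incur zero overhead. Your calibration argument also fails on its own terms. With the $\tfrac12$, the zero-overhead limit is $8\sqrt{n}/\tfrac12=16\sqrt{n}$, twice the capacity in Lemma~\ref{Th1}. Matching $8\sqrt{n}$ would need a baseline of $1$, i.e.\ $16\sqrt{n}/(2+\sigma n^{1.5}+4n\bar H_k)$. The paper's proof has exactly the same hole. It stops at $\phi(n)\ge\tfrac{\sigma}{2}n^{1.5}+2n\bar H_k$ and declares the theorem proved ``based on Definition~\ref{DES}''. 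Its remark that the bound degenerates to Lemma~\ref{Th1} is off by the same factor of $2$. The honest repair is to state up front that $\tau$ is the regularized ratio $C/(\tfrac12+\phi(n))$, with $C\le 8\sqrt{n}$ the non-failure capacity. With that definition, the rest of your argument gives the displayed formula verbatim.

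Two of your steps are cleaner than the paper's. First, you count exactly $2n$ nominal ISLs, which is the direction a lower bound needs. The paper says ``at most $2n$'', which on its own would bound $\phi_2$ from above. The justification is that every nominal ISL, ON or OFF, evolves under the chain and its state must be resolved. Second, your ``about $\sqrt{n}/2$'' can be made exact. The parity computation gives $\overline{L}(n)=\sqrt{n}/2$ for odd $\sqrt{n}$ and $n\sqrt{n}/\bigl(2(n-1)\bigr)>\sqrt{n}/2$ for even $\sqrt{n}$. Since failures only delete edges, no delivered packet can use fewer hops than its torus distance. Hence $\phi_1(n)\ge\sigma n^{1.5}/2$ holds with no approximation.
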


\begin{IEEEproof} From Lemma \ref{Th1}, it can be seen that under ideal conditions, the upper bound of the constellation capacity is $8\sqrt{n}$. The total overhead of the constellation is denoted by 
\begin{equation}
\phi(n) = \phi_1(n) + \phi_2(n),
\label{TH1_eq00} 
\end{equation}
where $\phi_1(n)$ is the contention overhead and $\phi_2(n)$ is the consensus overhead.

 \emph{Contention Overhead $\phi_1(n)$}:  we first prove the average hop count under the uniform traffic pattern.  For any satellites $i$ and $j$ with $j \neq i$, the minimum hop count between them is given as follows:
\begin{equation}
d(i,j)
= \min\{\delta_x, \sqrt{n} - \delta_x\} + \min\{\delta_y, \sqrt{n} - \delta y\},
\label{TH1_eq14} 
\end{equation}
where $\delta_x = |x_i - x_j|$ and $\delta_y = |y_i - y_j|$.

Combining (\ref{TH1_eq14}), the average hop count is given as follows
\begin{equation}
\overline{L}(n) = \frac{1}{n} \sum_{i \in V} \sum_{j \in V} \text{Pr}(i, j) d(i,j).
\label{TH1_eq15} 
\end{equation}
where the probability $\text{Pr}(i, j)$ that satellite $i$ is matched with satellite $j$ to form a source--destination pair is 
\[
\text{Pr}(i, j) = 
\begin{cases}
\frac{1}{n - 1}, &  i \neq j, \\
0, & i = j.
\end{cases}
\]

Substituting (\ref{TH1_eq14}) into (\ref{TH1_eq15}), we obtain
\begin{equation}
\overline{L}(n) = 
\begin{cases}
\frac{\sqrt{n}}{2}, & \sqrt{n}\:\: \text{odd} \\
\frac{n\sqrt{n}}{2(n-1)}, & \sqrt{n}\:\: \text{even}
\end{cases}
= \frac{\sqrt{n}}{2} + O(1/\sqrt{n}).
\label{TH1_eq16} 
\end{equation}

According to the construction rules of the uniform traffic pattern, the total number of source--destination pairs in the constellation is $n$. 
The total number of hops required to successfully deliver all packets generated by the source satellites to their corresponding destination satellites is $n\overline{L}(n)$.

 Combining Assumption (A.2), the total \emph{contention overhead} is
\begin{equation}
\phi_{1}(n)\geq \sigma n\overline{L}(n)\approx \frac{\sigma n^{1.5}}{2}.
\label{TH1_eq18} 
\end{equation}

\emph{Consensus Overhead $\phi_2(n)$}:  For a constellation of size $n$, the total number of ISLs is at most $2n$. Combining Proposition \ref{P1}, we have
\begin{equation}
\phi_{2}(n)= 2n\omega\geq2n\bar{H}_k(\alpha,\beta).
\label{TH1_eq19} 
\end{equation}

Substituting (\ref{TH1_eq18}) and (\ref{TH1_eq19}) into (\ref{TH1_eq00}), we obtain
\begin{equation}
\phi(n)\geq \frac{\sigma n^{1.5}}{2}+2n\bar{H}_k(\alpha,\beta).
\end{equation}

Based on Definition \ref{DES}, Theorem \ref{STH1} is proved. 

\end{IEEEproof} 

\begin{remark}
When the total contention overhead is constant ($\sigma = c$), the upper bound of capacity scalability increases from $O(1/n)$ to $O(1/\sqrt{n})$. This merely reduces the convergence rate of capacity scalability to zero.
\end{remark}

When the constellation size grows, the total overhead increases much faster than the constellation capacity, i.e., when ISLs are unstable, the constellation does not have scalability.
This shows that building large constellations depends on reducing overhead rather than on designing the constellation structure.  This is because, for any topology, the upper bound of capacity  is $O(\sqrt{n})$. Routing strategies with low overhead should be used, and the link maintenance period $k$ should be chosen properly according to the ISL evolution behavior. If this is not possible, the frequency of using multi-hop transmission should be reduced.

To ensure that the capacity of the constellation for carrying data packets remains at $\Theta(\sqrt{n})$, 
the contention overhead and the consensus overhead should satisfy the following conditions: $\sigma = o(n^{-1.5})$ and $\omega = o(n^{-1})$, where $\epsilon$ is an arbitrarily small positive real number. This condition is extremely stringent. As the constellation size increases, the capacity available for ISL maintenance and routing becomes lower.

By fitting the simulation data from small-scale constellations, the competitive overhead ($\sigma$) and consensus overhead ($\omega$) in (\ref{TH1}) can be obtained. This enables the prediction of the scalability of constellations using specific routing algorithms and ISL maintenance strategies. This is particularly important for the design of network protocols in large-scale constellations, especially under limited computational resources for simulations.

\subsection{ Maximum Capacity Scalability }

Based on the above analysis, for a given routing algorithm and ISL maintenance strategy, there exists an optimal constellation size $n_{*}$, at which the capacity scalability  reaches its maximum. Let $ Q(n,k,\alpha,\beta)\triangleq \frac{16\sqrt{n}}{1+ \sigma n^{1.5} + 4n\bar{H}_k(\alpha,\beta)}$.   By solving 
\begin{equation}
\frac{\partial Q}{\partial n}=0,
\label{TH1_eq22}
\end{equation}
the optimal constellation size $n_{*}$ is obtained.

\begin{theorem}
Given the ISL evolution behavior and the routing algorithm, the optimal constellation size $n_{*}$ satisfies
\begin{equation}
2\sigma n_{*}^{1.5} + 4n_{*}\bar{H}_k(\alpha,\beta)  =1
\label{eq:optimal_n}
\end{equation}
The maximum scalability  is $8\sqrt{n_{*}}$.
\label{STH2}
\end{theorem}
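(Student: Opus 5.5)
The plan is to treat $n$ as a continuous variable on $(0,\infty)$ and maximize $Q(n)=16\sqrt{n}/D(n)$ with $D(n)=1+\sigma n^{1.5}+4n\bar{H}_k(\alpha,\beta)$. Here $\sigma$, $k$, $\alpha$, $\beta$ are fixed, so $\bar{H}_k\ge 0$ is a constant given by Proposition \ref{P1}. The first step is to differentiate by the quotient rule:
\[
Q'(n)=\frac{16}{D(n)^2}\Bigl(\tfrac{1}{2\sqrt{n}}D(n)-\sqrt{n}D'(n)\Bigr)=\frac{8}{\sqrt{n}\,D(n)^2}\bigl(D(n)-2nD'(n)\bigr),
\]
where $D'(n)=1.5\sigma n^{0.5}+4\bar{H}_k$. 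The sign of $Q'$ is therefore the sign of
\[
D(n)-2nD'(n)=1-\bigl(2\sigma n^{1.5}+4n\bar{H}_k\bigr).
\]
Setting this to zero gives exactly (\ref{eq:optimal_n}).

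The second step is to show that this stationary point is the unique global maximizer, not just a critical point. Let $g(n)=2\sigma n^{1.5}+4n\bar{H}_k$. Assume $\sigma$ and $\bar{H}_k$ are not both zero; otherwise $Q=16\sqrt{n}$ is increasing and no optimum exists. Then $g$ is continuous and strictly increasing, with $g(0^+)=0$ and $g(n)\to\infty$. So $g(n)=1$ has exactly one root $n_*$. For $n<n_*$ we have $Q'>0$, and for $n>n_*$ we have $Q'<0$. This gives the monotone behaviour described after the theorem: capacity scalability increases below $n_*$ and decreases to zero above it, consistent with the $O(1/n)$ rate of Theorem \ref{STH1}. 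Integer $n$ (or $n$ a perfect square, as Lemma \ref{Th1} requires) is handled by rounding $n_*$ to the better of its two neighbours. I would also note the limiting cases. If $\sigma\to 0$ then $n_*\approx(4\bar{H}_k)^{-1}$; if $\bar{H}_k\to 0$ then $n_*=(2\sigma)^{-2/3}$. These match the statements in the introduction up to constants.

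The third step, which I expect to be the delicate one, is evaluating $Q(n_*)$. Substituting $4n_*\bar{H}_k=1-2\sigma n_*^{1.5}$ into $D$ gives
\[
D(n_*)=2-\sigma n_*^{1.5},\qquad Q(n_*)=\frac{16\sqrt{n_*}}{2-\sigma n_*^{1.5}}.
\]
This equals $8\sqrt{n_*}$ exactly only when the contention term $\sigma n_*^{1.5}$ vanishes, which is the regime where consensus overhead dominates. In general we only get $8\sqrt{n_*}\le Q(n_*)\le 16\sqrt{n_*}$, because $0\le\sigma n_*^{1.5}\le 1/2$ by (\ref{eq:optimal_n}).

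I would therefore present the maximum as $16\sqrt{n_*}/(2-\sigma n_*^{1.5})$. I would then state $8\sqrt{n_*}$ as the value attained when $\sigma n_*^{1.5}\ll 1$, and as a lower estimate that is tight up to a factor of at most $2$ otherwise. In other words, the claimed value "$8\sqrt{n_*}$" is an approximation, not an identity, and I would flag this when writing up the proof.
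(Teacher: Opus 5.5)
Your proposal is correct, and it is more careful than the paper's own proof. The stationarity step is the same as the paper's: with $D(n)=1+\sigma n^{1.5}+4n\bar{H}_k$, the paper also obtains $\partial Q/\partial n = 8\bigl(1-2\sigma n^{1.5}-4\bar{H}_k n\bigr)/\bigl(\sqrt{n}\,D(n)^2\bigr)$ and discards the positive denominator to reach (\ref{eq:optimal_n}).

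You differ in how the maximum is certified. The paper computes $Q''(n_*) = -8(3\sigma\sqrt{n_*}+4\bar{H}_k)/\bigl(\sqrt{n_*}\,D(n_*)^2\bigr)<0$, which only shows that $n_*$ is a local maximizer. Your observation that $g(n)=2\sigma n^{1.5}+4n\bar{H}_k$ is strictly increasing from $0$ to $\infty$ gives three things at once: uniqueness of the root, a genuine sign change of $Q'$, and hence a global maximum. It also justifies the ``increase then decrease'' narrative and your integer or perfect-square rounding, so it is the stronger argument.

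On the value of the maximum, the paper only says that substituting (\ref{eq:optimal_n}) into $Q$ yields it, and does not display the computation. Your computation $D(n_*)=2-\sigma n_*^{1.5}$, giving $Q(n_*)=16\sqrt{n_*}/(2-\sigma n_*^{1.5})$, is right. The stated value $8\sqrt{n_*}$ is therefore exact only when $\sigma=0$, and you are right to flag this rather than force it.

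You can sharpen your bracket. Since $\bar{H}_k\ge 0$, (\ref{eq:optimal_n}) gives $\sigma n_*^{1.5}\le 1/2$, so the denominator is at least $3/2$. Hence $8\sqrt{n_*}\le Q(n_*)\le \tfrac{32}{3}\sqrt{n_*}$. The upper end is attained when $\bar{H}_k=0$, where $n_*=(2\sigma)^{-2/3}$ and $\sigma n_*^{1.5}=1/2$. The claimed $8\sqrt{n_*}$ is thus off by a factor of at most $4/3$, not $2$.
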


\begin{IEEEproof}By taking the partial derivative,  (\ref{TH1_eq22}) is simplified to
\begin{equation}
 \frac{8\left(1 - 2\sigma n^{1.5} - 4\bar{H}_k(\alpha,\beta) n\right)}{\sqrt{n}\left(1 + \sigma n^{1.5} + 4n\bar{H}_k(\alpha,\beta)\right)^2}=0.
\label{TH1_eq23}
\end{equation}

Since $\sqrt{n}\left(1 + \sigma n^{1.5} + 4n\bar{H}_k(\alpha,\beta)\right)^2 > 0$, $n_{*}$ satisfies 
\begin{equation}
2\sigma n^{1.5} + 4n\bar{H}_k(\alpha,\beta)  =1
\label{TH1_eq24}
\end{equation}

Substituting (\ref{TH1_eq24}) into $Q(n,k,\alpha,\beta)$ yields the maximum capacity scalability. This is because $\frac{\partial^2 Q}{\partial n^2} < 0$, i.e.,

\[
\frac{\partial^2 Q}{\partial n^2} \Bigg|_{n = n_*}
= \frac{-8\left(3\sigma\sqrt{n_*} + 4\bar{H}_k(\alpha,\beta)\right)}
       {\sqrt{n_*}\left(1 + \sigma(n_*)^{1.5} + 4 n_* \bar{H}_k(\alpha,\beta)\right)^2} < 0.
\]
\end{IEEEproof}

The capacity scalability  first increases and then decreases with the number of satellites. 
This implies that, for a given routing algorithm and link maintenance strategy, there exists an optimal number of satellites that maximizes capacity scalability.  This maximum is extremely sensitive to the state of the ISLs. Specifically, when the recovery probability is given, increasing the failure probability reduces the maximum capacity scalability  $\tau$ and shifts the optimal number of deployed satellites to the left. 
A numerical example is shown in Fig.~\ref{TH2_Fig5}.

\begin{figure}[!htbp]
    \centering
    \includegraphics[scale=0.6]{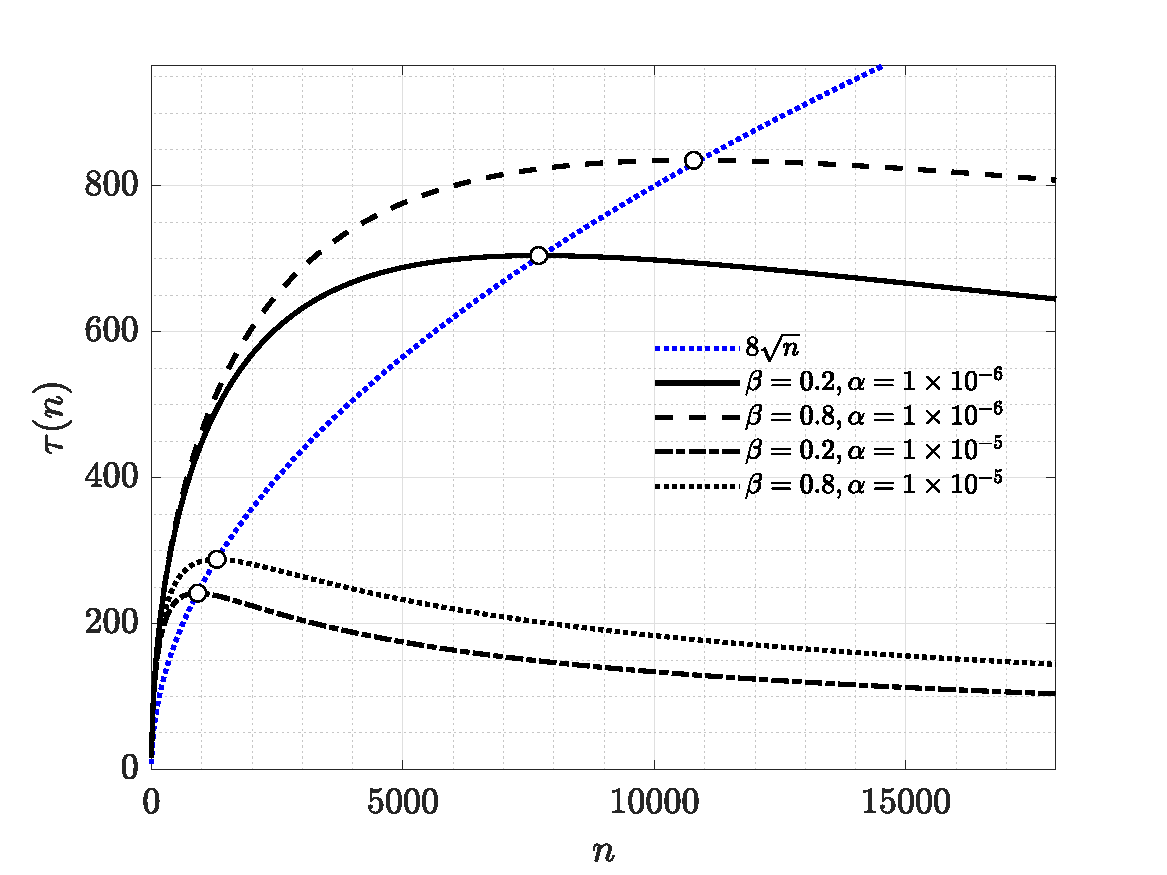}
    \caption{Scalability  vs. $n$ under different ISL evolution behaviors.}
    \label{TH2_Fig5}
\end{figure}

Corollary~\ref{Coro_TH2} provides the optimal $n_{*}$ in two extreme scenarios: when $\sigma \to 0$ and when $\bar{H}_k(\alpha,\beta) \to 0$.

\begin{corollary} The optimal constellation sizes under the two extreme scenarios are given as follows:

Scenario I: When the contention overhead is zero, i.e., $\sigma \to 0$,
\begin{equation}
n_{*}^{1} \approx \frac{1}{\bar{H}_k(\alpha,\beta)}.
\label{C3_eq26}
\end{equation}

Scenario II: When the coherency overhead is zero, i.e., $\bar{H}_k(\alpha,\beta) \to 0$,

\begin{equation}
n_{*}^{2} \approx \left( \frac{1}{2\sigma} \right)^{2/3}.
\label{C3_eq27}
\end{equation}
\label{Coro_TH2}
\end{corollary}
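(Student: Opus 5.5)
The plan is to start from the optimality condition of Theorem \ref{STH2}, namely $2\sigma n_{*}^{1.5} + 4n_{*}\bar{H}_k(\alpha,\beta) = 1$, and treat each scenario as a singular limit in which one of the two overhead terms disappears. Since the left-hand side $g(n)=2\sigma n^{1.5}+4n\bar{H}_k(\alpha,\beta)$ is continuous and strictly increasing in $n>0$, with $g(0)=0$ and $g(n)\to\infty$, the equation has a unique root $n_{*}(\sigma,\bar{H}_k)$. This root depends continuously on the two parameters as long as at least one of them is positive. This justifies replacing the exact root by the root of the reduced equation in each limit.

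\emph{Scenario I} ($\sigma\to 0$). Set $\sigma=0$ in (\ref{eq:optimal_n}); the condition reduces to the linear equation $4n\bar{H}_k(\alpha,\beta)=1$. To make the approximation precise, write $n_{*}=n_0(1+\delta)$, where $n_0$ solves the reduced equation. Then expand $g(n_{*})=1$ to first order and show $\delta=O\bigl(\sigma n_0^{1.5}\bigr)\to 0$, so the relative error vanishes. Equivalently, one can bound $n_{*}$ between the roots obtained by dropping the $\sigma$ term and by bounding it above on a compact interval.

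\emph{Scenario II} ($\bar{H}_k\to 0$). Here the reduced equation is $2\sigma n^{1.5}=1$, giving $n=(2\sigma)^{-2/3}$ directly, which is (\ref{C3_eq27}). The same monotonicity and perturbation argument shows the correction is of relative order $\bar{H}_k(\alpha,\beta)\,(2\sigma)^{-2/3}$, which tends to $0$.

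The main obstacle is Scenario I, because of the constant. Dropping $\sigma$ in (\ref{eq:optimal_n}) literally gives $n_{*}=1/(4\bar{H}_k)$, not $1/\bar{H}_k$. So I would first recheck the derivative in (\ref{TH1_eq23}). Differentiating $Q=16\sqrt{n}/(1+\sigma n^{1.5}+4n\bar{H}_k)$ with $\sigma=0$ gives a numerator proportional to $1-4n\bar{H}_k$, which confirms the root $1/(4\bar{H}_k)$. The resolution I would adopt is to read ``$\approx$'' in (\ref{C3_eq26}) as an order-of-magnitude statement, $n_{*}^{1}=\Theta\bigl(\bar{H}_k^{-1}(\alpha,\beta)\bigr)$ with explicit constant $1/4$, and state this constant in the proof. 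Alternatively, if the intended normalization counts the $2n$ ISLs with a per-link bound $\bar{H}_k$ scaled differently, I would adjust the constant consistently with (\ref{TH1_eq19}).
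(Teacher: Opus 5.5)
Your proposal is correct and follows the paper's own route: the paper's proof just drops the vanishing term from the optimality condition (\ref{eq:optimal_n}) of Theorem~\ref{STH2} in each limit, and your monotonicity and perturbation estimates ($\delta=O(\sigma n_0^{1.5})$ in Scenario I, relative error $O\bigl(\bar{H}_k(\alpha,\beta)(2\sigma)^{-2/3}\bigr)$ in Scenario II) simply make the ``$\approx$'' precise. Your factor-of-$4$ flag is also right, and the paper itself confirms it: the Scenario I entries of Table~\ref{table_identical_initial} ($2.5\times 10^{5}$ and $2.5\times 10^{6}$ for overheads $10^{-6}$ and $10^{-7}$) are exactly $1/(4\bar{H}_k)$, so (\ref{C3_eq26}) should read $n_*^{1}\approx 1/\bigl(4\bar{H}_k(\alpha,\beta)\bigr)$, and no renormalization of the per-link bound is needed.
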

\begin{IEEEproof} 
This proof is completed by analyzing equation (\ref{eq:optimal_n}).

\end{IEEEproof}

It can be easily observed that, for the same reduction in overhead, the optimal constellation size is more sensitive to consensus overhead than to contention overhead. This implies that reducing the ISL maintenance consensus overhead yields the greatest improvement in capacity scalability. In Table \ref{table_identical_initial}, when the overheads in both scenarios are reduced by 10 times, the optimal constellation size in Scenario I increases by 10 times, which is far larger than the 4.6 times increase observed in Scenario II.

\begin{table}[!htbp]
\caption{Optimal size $n_*$ under 10$\times$ overhead reduction.}
\label{table_identical_initial}
\centering
\footnotesize
\setlength{\tabcolsep}{4pt} 
\renewcommand{\arraystretch}{1.3} 
\begin{tabular}{ccccc}
\toprule
\textbf{Scenario} & \textbf{Overhead} & \textbf{$n_*$ (Initial)} & \textbf{$n_*$ (Optimized)} & \textbf{Gain} \\
\midrule
I 
& \multirow{2}{*}{$10^{-6} \to 10^{-7}$} 
& $2.5\times 10^5$ 
& $2.5\times 10^6$ 
& 10 times \\[1.5pt] 

II 
% 第二列已由 multirow 覆盖，不再写
& 
& $6.3\times 10^3$ 
& $2.9\times 10^4$ 
& 4.6 times \\ 
\bottomrule
\end{tabular}
\end{table}
\section{Further Discussion \label{SECV}}

As shown in Section~\ref{SECII}-B, the ISL evolution behavior and the maintenance period affect both the consensus overhead and the optimal constellation size. In this section, a detailed analysis is presented. 

\subsection{Capacity Scalability vs. Maintenance Period}
From Corollary~\ref{Coro1}, the choice of the  maintenance period ($k$) influences the lower bound of the consensus overhead under a given ISL evolution behavior. In this subsection, we analyze the impact of the maintenance period on scalability. Based on Corollary~\ref{Coro1}, the following conclusion can be drawn.

\begin{corollary}When the ISL evolution behavior is given (i.e., $\alpha$ and $\beta$ are fixed), for any $k \in [1,+\infty)$, the upper bound of \textit{capacity scalability} satisfies the following inequality: 
\begin{equation}
     \tau_{1}(n) \leq \tau(n, k) < \tau_{\infty}(n),\label{C4_eq28}
\end{equation}
where 
    \begin{equation}
        \tau_{1}(n) = \frac{16\sqrt{n}}{1 + \sigma n^{1.5} + 4n  h\left( \pi_{1}\right)}
        \label{C4_eq29}
    \end{equation}
    and
    \begin{equation}
        \tau_{\infty}(n) = \frac{16\sqrt{n}}{1 + \sigma n^{1.5} + 4n \left[ \pi_{1} h(\alpha) + \pi_{0} h(\beta) \right]}
    \end{equation}
$\pi_{1}$ and $\pi_{0}$ are given in (\ref{eq:steady_state_probs}).
    \label{Coro4}
    \end{corollary}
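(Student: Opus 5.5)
The plan is to reduce the statement to a monotonicity property of the lower bound $\bar{H}_k(\alpha,\beta)$ in Proposition~\ref{P1}. The map $x \mapsto 16\sqrt{n}/(1+\sigma n^{1.5}+4nx)$ is strictly decreasing in $x \ge 0$ for every fixed $n\ge 1$ and $\sigma\ge 0$. Hence it suffices to show
\[
\bar{H}_\infty(\alpha,\beta) < \bar{H}_k(\alpha,\beta) \le h(\pi_1), \qquad k\in[1,+\infty),
\]
with $\bar{H}_\infty = \pi_1 h(\alpha)+\pi_0 h(\beta)$ as in Corollary~\ref{Coro1}. The two inequalities then transfer to $\tau_1(n)\le\tau(n,k)<\tau_\infty(n)$.

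\textbf{Step 1.} Rewrite (\ref{eqP1}) in the form
\[
\bar{H}_k(\alpha,\beta) = \bar{H}_\infty(\alpha,\beta) + \frac{h(\pi_1)-\bar{H}_\infty(\alpha,\beta)}{k}.
\]
This is just algebra: split $(k-1)/k = 1 - 1/k$. It exhibits $\bar{H}_k$ as a convex combination of $h(\pi_1)$ (weight $1/k$) and $\bar{H}_\infty$ (weight $1-1/k$).

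\textbf{Step 2.} Show $h(\pi_1) \ge \bar{H}_\infty(\alpha,\beta)$. I would use the information-theoretic interpretation from the proof of Proposition~\ref{P1} rather than brute calculus. Under stationarity, $X_2$ has the same marginal as $X_1$, so $H(X_2)=h(\pi_1)$. By (\ref{eq:cond_entropy}), $\bar{H}_\infty = H(X_2\mid X_1)$. Since conditioning does not increase entropy, $H(X_2\mid X_1)\le H(X_2)$.

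For the equality case, note that $H(X_2\mid X_1)= H(X_2)$ holds iff $X_1$ and $X_2$ are independent. This happens iff $p_{11}(1)=p_{01}(1)$, i.e.\ $1-\alpha=\beta$, i.e.\ $\mu=0$. This matches the remark after Corollary~\ref{Coro1}.

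\textbf{Step 3.} Conclude from Step 1. Since the gap $h(\pi_1)-\bar{H}_\infty$ is nonnegative, $\bar{H}_k$ is nonincreasing in $k$ and $\bar{H}_k \le \bar{H}_1 = h(\pi_1)$. This gives $\tau_1(n)\le \tau(n,k)$, with equality at $k=1$.

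For finite $k$, $\bar{H}_k-\bar{H}_\infty = (h(\pi_1)-\bar{H}_\infty)/k$, which is strictly positive exactly when $\mu\neq 0$. This gives the strict upper inequality $\tau(n,k)<\tau_\infty(n)$.

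\textbf{Main obstacle.} The only delicate point is the strict inequality. In the memoryless case $\alpha+\beta=1$, all three quantities coincide, so the right inequality holds only with equality. I would therefore state the strict part under the assumption $\mu\neq0$, which is the case of interest (ISLs with memory), and otherwise read it as $\le$. Apart from this, everything follows from Proposition~\ref{P1} and the monotonicity of $\tau(n,k)$ in $\bar{H}_k$.
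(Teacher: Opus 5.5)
Your proposal is correct and follows the paper's route. The paper's one-line proof substitutes the $k=1$ and $k\to\infty$ values from Corollary~\ref{Coro1} into (\ref{TH1}). That silently relies on two facts: $\tau(n,k)$ decreases in $\bar{H}_k(\alpha,\beta)$, and $\bar{H}_\infty(\alpha,\beta)\le\bar{H}_k(\alpha,\beta)\le h(\pi_1)$ for every $k$. You make both explicit through the rewrite $\bar{H}_k=\bar{H}_\infty+(h(\pi_1)-\bar{H}_\infty)/k$ and the inequality $H(X_2\mid X_1)\le H(X_2)$. That inequality holds for every $(\alpha,\beta)$, including $\alpha+\beta>1$, whereas the paper's remark after Corollary~\ref{Coro1} only asserts it for $0<\alpha+\beta\le 1$.

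Your caveat about strictness is also right. When $\alpha+\beta=1$ all three bounds coincide, so $\tau(n,k)<\tau_\infty(n)$ holds only for $\mu\neq 0$. This matches the equality condition the paper itself states after Corollary~\ref{Coro1}.
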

    
\begin{IEEEproof} Substituting the lower bound of the consensus overhead in Corollary~\ref{Coro1} into (\ref{TH1}), the proof is completed.

\end{IEEEproof} 

 Increasing the maintenance period $k$ cannot change the inherent trend that scalability first increases and then decreases with $n$. This is because the denominator terms in both $\tau_{1}(n)$ and $\tau_{\infty}(n)$ still follow  $f(n)=1+an^{1.5}+cn$, where $a,c>0$. Although increasing $k$ cannot change the fact that scalability converges to zero as $n$ increases, it can shift the optimal constellation size to a larger value, at which the scalability  is maximized. Note that this gain does not continue to increase with $k$. In practice, it is sufficient to choose a relatively large maintenance period. A numerical example is given in Fig. \ref{C4_fig6}.  The maximum gain of scalability with respect to $n$ also follows the pattern of first increasing and then decreasing (see Fig. \ref{fig6:sub2}). When $n \to \infty$, we have 
\begin{equation}
\lim_{n \to \infty} \Delta \tau(n) = 0,
\label{Delta1}
\end{equation}
where $\Delta \tau(n) =\tau_{\infty}(n)-\tau_{1}(n)$. This implies that, as the constellation size increases, the scalability gain obtained by increasing $k$ converges to 0.
  
\begin{figure}[!htbp]  % 单栏环境
    \centering
    % 子图1
    \begin{subfigure}[b]{\linewidth}  % 占满整栏
        \centering
        \includegraphics[width=.9\linewidth]{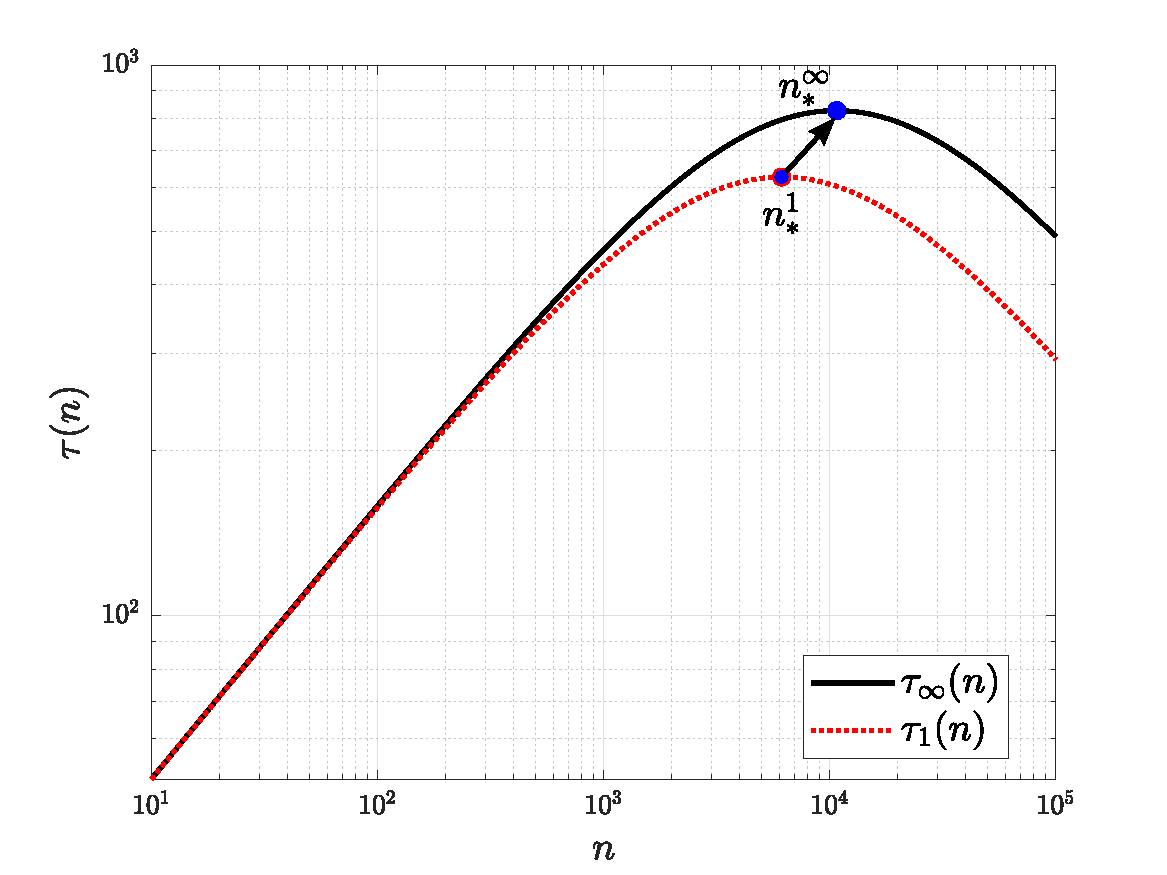}  % 图像宽度可调
        \caption{$\tau(n)$ vs. $n$}
        \label{fig6:sub1}
    \end{subfigure}
    \begin{subfigure}[b]{\linewidth}  % 占满整栏
        \centering
        \includegraphics[width=.9\linewidth]{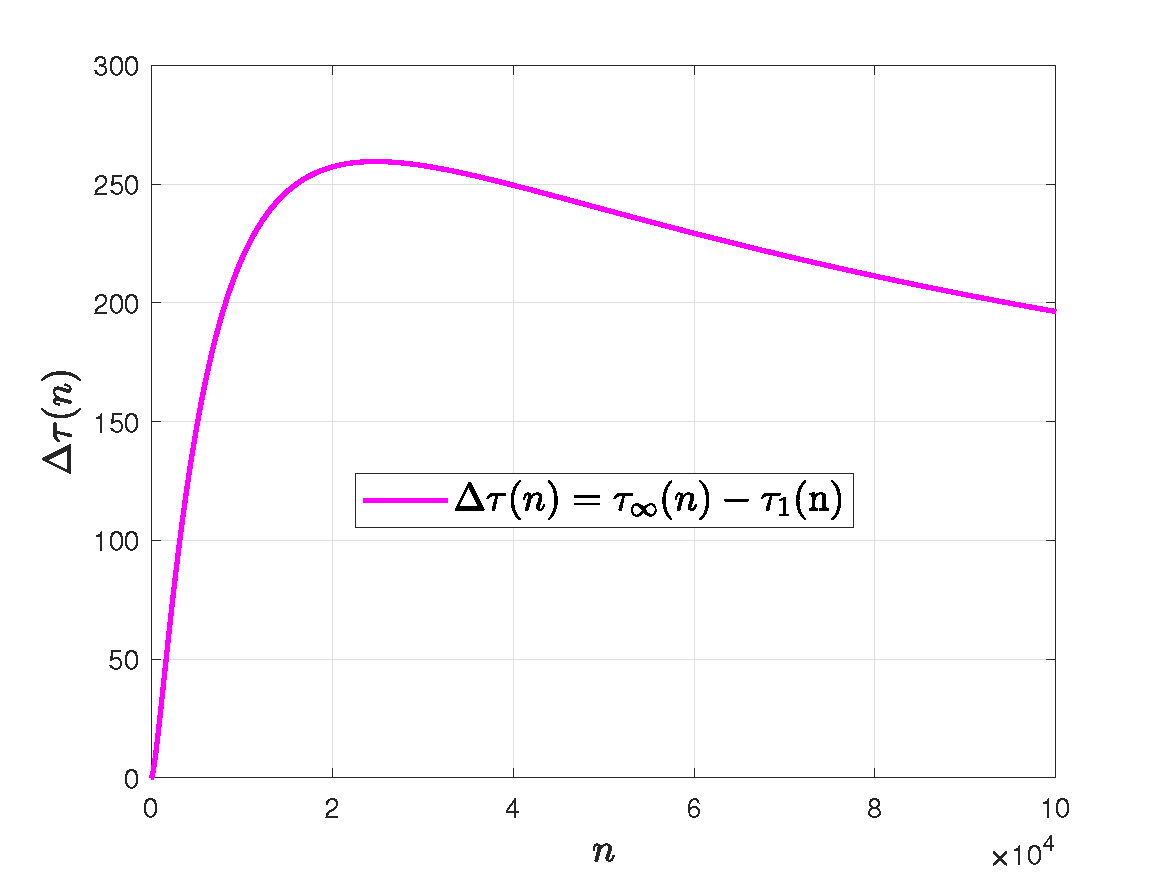}
        \caption{$\Delta \tau(n)$ vs. $n$}
        \label{fig6:sub2}
    \end{subfigure}
 \caption{Scalability versus $n$ for maintenance periods $k=1$ and $k \to \infty$.  The contention overhead, failure probability and recovery probability are $\sigma =  10^{-10}$, $\alpha =  10^{-6}$ and $\beta = 0.5$, respectively. }
    \label{C4_fig6}
\end{figure}
    
\subsection{Capacity Scalability vs. ISL  Evolution Behavior}
In practice, ISL states vary with the space environment, resulting in differing consensus overheads. The lower bound on the consensus overhead varies across different regions (see Corollary \ref{Coro2}). Based on Corollary \ref{Coro2}, we derive the upper bounds on the scalability of \emph{Regions} I-IV.  This result provides insight into the variation of the scalability upper bound when ISLs undergo extreme evolution behavior.

\begin{corollary}
When the maintenance period is given, i.e., $k = c \geq 1$, the upper bounds on the capacity scalability in \emph{Regions} I-IV are as follows:

\vspace{5pt} 
\centering
\small
\renewcommand{\arraystretch}{1.2} % 降低全局行高，使标题行变窄
\begin{tabularx}{\columnwidth}{@{} >{\centering\arraybackslash\bfseries}m{2cm} >{\centering\arraybackslash}X @{}}
\toprule
Region ID & \bfseries Upper Bound  \\ 
\midrule
% 为涉及分式的行单独增加垂直间距，确保公式不拥挤
\addlinespace[5pt]
I\&III & $\dfrac{16k\sqrt{n}}{k(1 + \sigma n^{1.5}) + 4n}$ \\[5pt] 
\midrule
\addlinespace[5pt]
II & $\dfrac{16k\sqrt{n}}{k(1 + \sigma n^{1.5}) + 4n(k-1)h(\alpha)}$ \\[5pt]
\midrule
\addlinespace[5pt]
IV & $\dfrac{16k\sqrt{n}}{k(1 + \sigma n^{1.5}) + 4n(k-1)h(\beta)}$ \\[5pt]
\bottomrule
\end{tabularx}
\vspace{5pt}

\end{corollary}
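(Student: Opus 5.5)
The plan is to substitute the region-wise approximations of Corollary~\ref{Coro2} into the general bound (\ref{TH1}) of Theorem~\ref{STH1}, and then clear the $1/k$ denominators. Theorem~\ref{STH1} gives, for every $k\geq 1$,
\[
\tau(n,k)=\frac{16\sqrt{n}}{1+\sigma n^{1.5}+4n\bar{H}_k(\alpha,\beta)},
\]
and the right-hand side is decreasing in $\bar{H}_k$. So in each region it suffices to replace $\bar{H}_k$ by the value listed in Corollary~\ref{Coro2}.

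The regions are treated in turn.

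For Regions I and III we have $\alpha\approx\beta$, so $h(\pi_1)\approx 1$. We also have $h(\alpha)\approx h(\beta)\approx 0$, since in Region III $\alpha,\beta$ are close to $1$ and $h$ vanishes at both endpoints. Hence Corollary~\ref{Coro2} gives $\bar{H}_k\approx 1/k$. Substituting and multiplying numerator and denominator by $k$ gives
\[
\frac{16k\sqrt{n}}{k(1+\sigma n^{1.5})+4n}.
\]

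For Region II, $\alpha\ll 1$ gives $\pi_0\approx 0$ and $\pi_1\approx 1$. Then $h(\pi_1)\approx 0$ and $\pi_0h(\beta)\approx 0$, so $\bar{H}_k\approx \frac{k-1}{k}h(\alpha)$. Multiplying through by $k$ gives the stated bound.

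Region IV is symmetric: $\beta\ll 1$ gives $\pi_1\approx 0$, so $\bar{H}_k\approx\frac{k-1}{k}h(\beta)$, and the same algebra applies.

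The main obstacle is making these approximations precise rather than purely heuristic. One specific difficulty is that in Region II we keep the product $\pi_1h(\alpha)\approx h(\alpha)$ even though $h(\alpha)$ is itself small. This is consistent only because we compare it with $h(\pi_1)$, which is of lower order. The reason is that $h(\pi_1)=h(\pi_0)$ with $\pi_0=\alpha/(\alpha+\beta)$ small, and one must argue that, divided by $k$, this term is dominated by $(k-1)h(\alpha)$ when $k>1$.

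I would state the result as holding up to the same approximation error that Corollary~\ref{Coro2} carries. Because $\tau(n,k)$ depends continuously on $\bar{H}_k$, the error propagates only as a relative factor $1+O(n\,\Delta\bar{H}_k/(1+\sigma n^{1.5}+4n\bar{H}_k))$. That relative error is small whenever $\Delta\bar{H}_k\ll\bar{H}_k$.

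The case $k=1$ needs no separate treatment. There the Region II and IV entries reduce to $16\sqrt{n}/(1+\sigma n^{1.5})$, which is an upper bound since $\bar{H}_1=h(\pi_1)\ge 0$.
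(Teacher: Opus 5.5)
Your argument is the paper's own: substitute the region-wise values of $\bar{H}_k$ from Corollary~\ref{Coro2} into (\ref{TH1}) and multiply numerator and denominator by $k$. Your algebra and your $k=1$ remark are correct. One side claim in your ``obstacle'' paragraph is wrong but harmless: in Region II, $h(\pi_1)=h(\pi_0)\approx h(\alpha)/\beta$ is of the same order as $h(\alpha)$, not lower order, so dropping $h(\pi_1)/k$ is accurate only when $\beta(k-1)\gg 1$. However, since $\tau(n,k)$ is decreasing in $\bar{H}_k$, dropping nonnegative terms only loosens the bound. Moreover, because $\pi_1 h(\alpha)+\pi_0 h(\beta)\geq\min\{h(\alpha),h(\beta)\}$, the Region II entry is a genuine upper bound on $\tau(n,k)$, not merely an approximation, whenever $h(\alpha)\leq h(\beta)$, and symmetrically for Region IV.
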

\begin{IEEEproof} Substituting the lower bound of the consensus overhead in Corollary~\ref{Coro2} into (\ref{TH1}), the proof is completed.

\end{IEEEproof} 

The evolution behaviors of ISLs in Regions I and III exhibit a stark contrast in their microscopic dynamics. In Region I, both the failure probability and the recovery probability are low. This means that once an ISL fails, it remains in the OFF state for a long time. In contrast, in Region III, both the failure and recovery probabilities are high, so the ISL remains in the ON state for only a short time. Interestingly, despite these  differences,   the ISL state evolution is highly predictable in Regions I and III, as the memory parameter $|\mu| = |1 - \alpha - \beta| \to 1$. The capacity scalability of Regions I and III is identical in the information-theoretic sense.  In practice, increasing ISL link resilience and reducing the number of  satellites should be used to prevent ISL evolution behavior from entering Regions I and III. Moreover, when ISL evolution behavior is in Regions I and III, increasing the maintenance period is not a good strategy because the ISL state memory parameter $\mu \to 1$.

  However, in Region II, the failure probability is low and the recovery probability is high, meaning that ISLs rarely fail. In this case, the scalability is optimal because $h(\alpha) \to 0$.  Compared to Region II, in Region IV, the failure probability is high and the recovery probability is low, making it nearly impossible for ISLs to recover after failure. As the number of satellites increases, the number of failed ISLs in the constellation increases because $h(\beta) \to 0$. Therefore, maintaining ISL evolution behavior in Region II is the optimal  for achieving the highly capacity scalability. A numerical example is shown in Fig. \ref{Cor5_fig1}. When $k=10$, using the same routing strategy --- $\sigma=10^{-11}$, the scalability in different ISL evolution regions satisfies $\tau_{2}(n)>\tau_{4}(n)>\tau_{1}(n)=\tau_{3}(n)$.
  
  \begin{figure}[!htbp]
    \centering
    \includegraphics[scale=0.625]{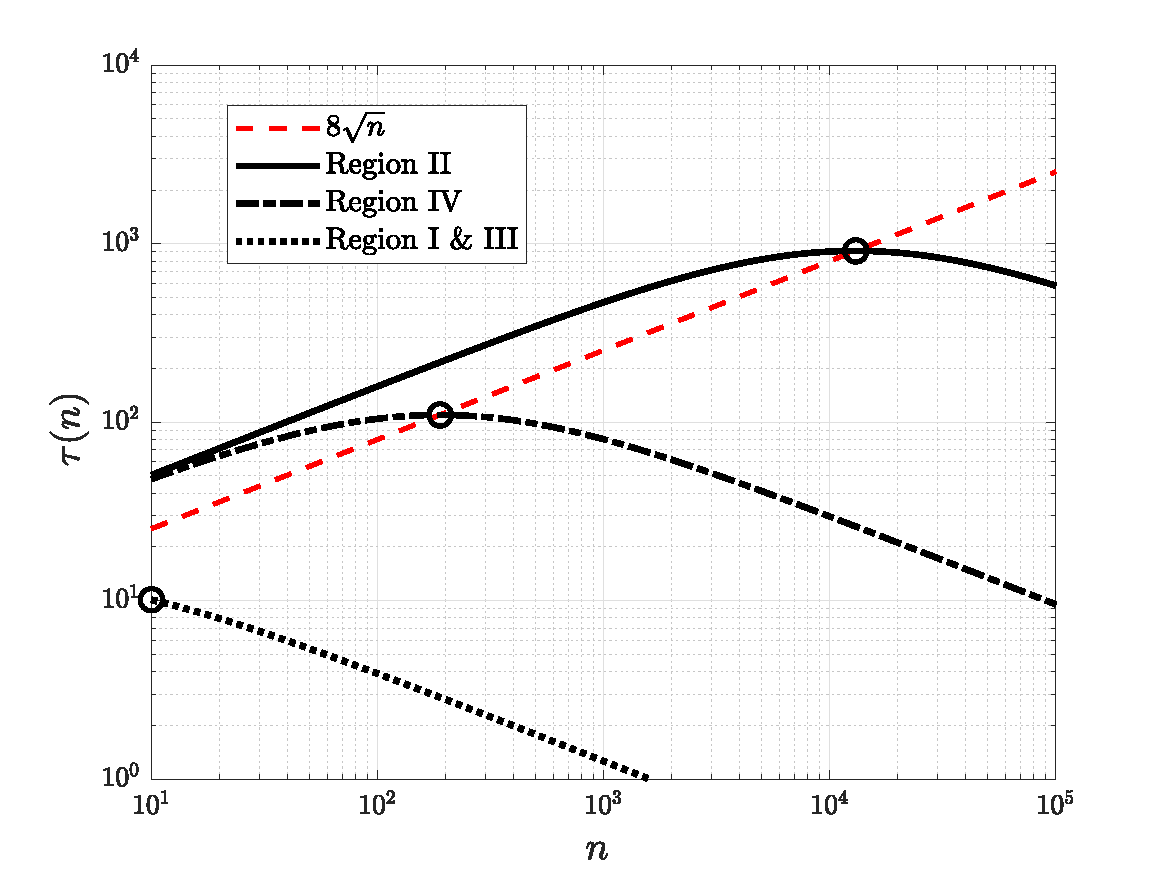}
    \caption{Capacity Scalability  vs. $n$ under different  regions.  $\epsilon_1 = 10^{-6},\;\epsilon_2 = 10^{-4},\;\epsilon_3 = 0.7,\;\epsilon_4 = 0.9$.}
    \label{Cor5_fig1}
\end{figure}

\section{Simulation Examples: Capacity Scalability with Shortest-hop Routing\label{SECVI}}

Assuming that the shortest-hop routing strategy is employed, we will simulate and verify the capacity scalability of the constellation under different regions of ISL evolution behavior. 

\subsection{Simulation of the Evolutionary Behavior of Constellation Structures}

To facilitate better understanding, the  evolution of the  number of ISLs in the ON state  and the constellation connectivity under different  parameters is presented in Fig. \ref{fig:four_sub8}. The  connectivity is defined as the ratio of the number of satellites in the largest connected component to the total number of satellites, denoted as $\xi(t)$.  The number of ISLs in the ON state characterizes local behavior, while connectivity reflects the global behavior of the constellation topology.

In Region I, both $\alpha$ and $\beta$ are very small, making the ISL states difficult to change. As a result, the constellation connectivity may suddenly collapse or  recover after remaining at a certain level for a period of time (see Fig. \ref{fig:sub1}). In Region II, the repair probability is high and the failure probability is low. $E_{\text{on}}(t)$ stays at a high level. The constellation connectivity is 1 with high probability (see Fig. \ref{fig:sub2}).  Region II is also the optimal operating region. In Region III, both the failure probability and the repair probability are high. This means that the ISL states change much more frequently compared to Region I. The constellation connectivity is extremely unstable and changes almost every time slot. Therefore, when the ISL states evolve in Region III, protocol parameters must be adjusted in every time slot.   Moreover, when the ISLs evolve in Region V, the parameters are set as $\alpha = \beta = 0.5$ because the entropy of each ISL reaches its maximum. In this case, the ISL states are almost impossible to predict because both $E_{\text{on}}(t)$ and $\xi(t)$ fluctuate violently (see Fig.\ref{fig:sub4}). 

In the simulation examples of Regions I, III and V, the average number of ISLs in the ON state is 10000, but the variances are significantly different. Comparing Figs. \ref{fig:sub2} and \ref{fig:sub3}, although the number of ISLs in the ON state fluctuates greatly in both cases, the connectivity behavior shows significant differences. This indicates that the number of ISLs in the ON state determines the constellation connectivity.
\begin{figure*}[!t]
    \centering
    % 第一行子图
    \begin{subfigure}[b]{0.45\textwidth}
        \centering
        \includegraphics[width=\linewidth]{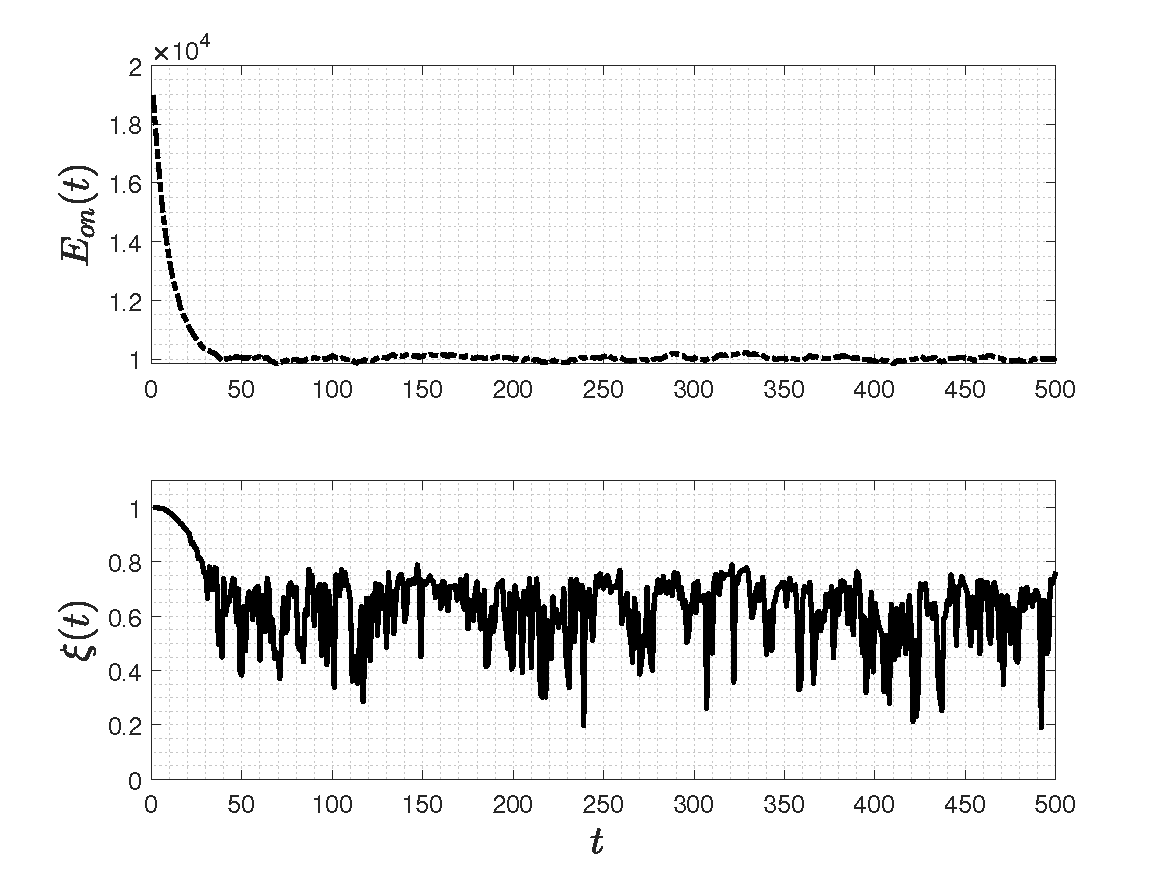}
        \caption{Region I: $\beta=\alpha=.05$}
        \label{fig:sub1}
    \end{subfigure}
    \hfill
    \begin{subfigure}[b]{0.45\textwidth}
        \centering
        \includegraphics[width=\linewidth]{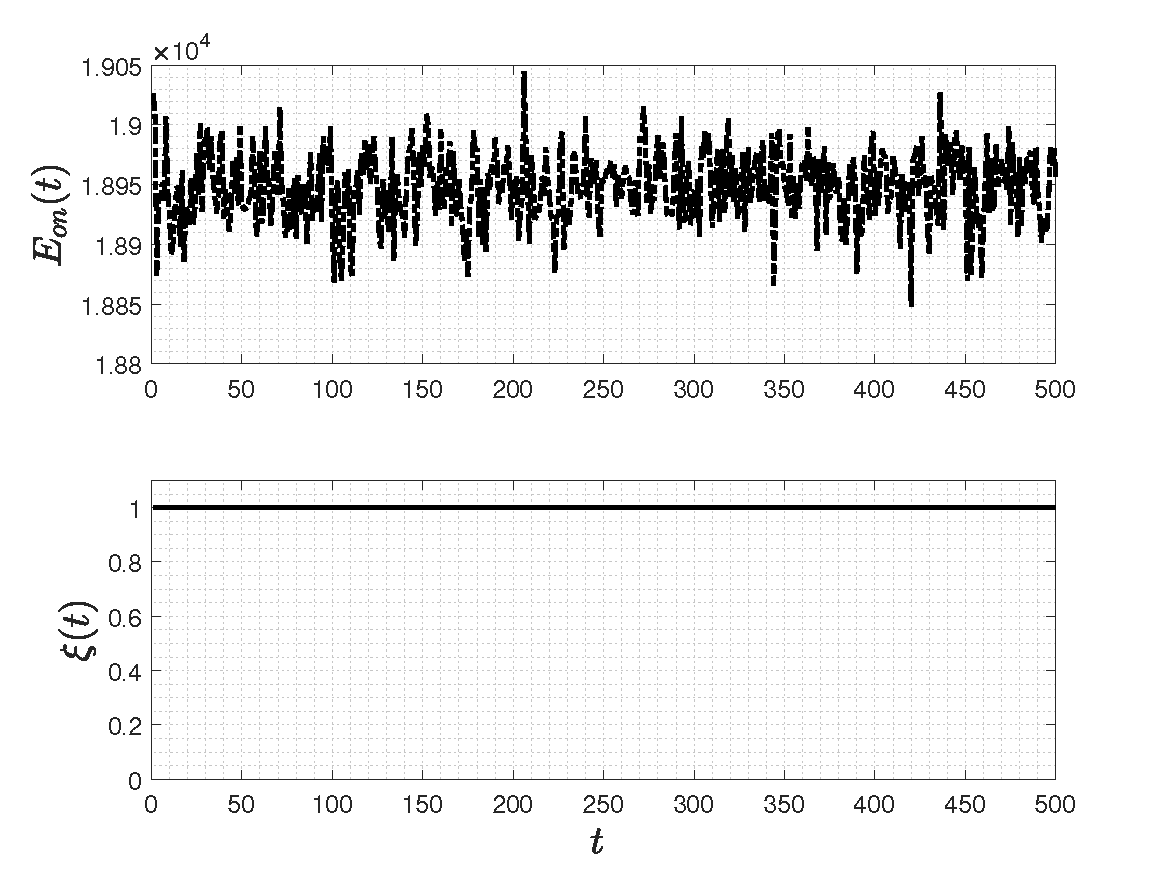}
        \caption{Region II: $\beta=.9$,\:$\alpha=.05$}
        \label{fig:sub2}
    \end{subfigure}

    \vspace{0.1cm} % 行间距，可调整

    % 第二行子图
       \begin{subfigure}[b]{0.45\textwidth}
        \centering
        \includegraphics[width=\linewidth]{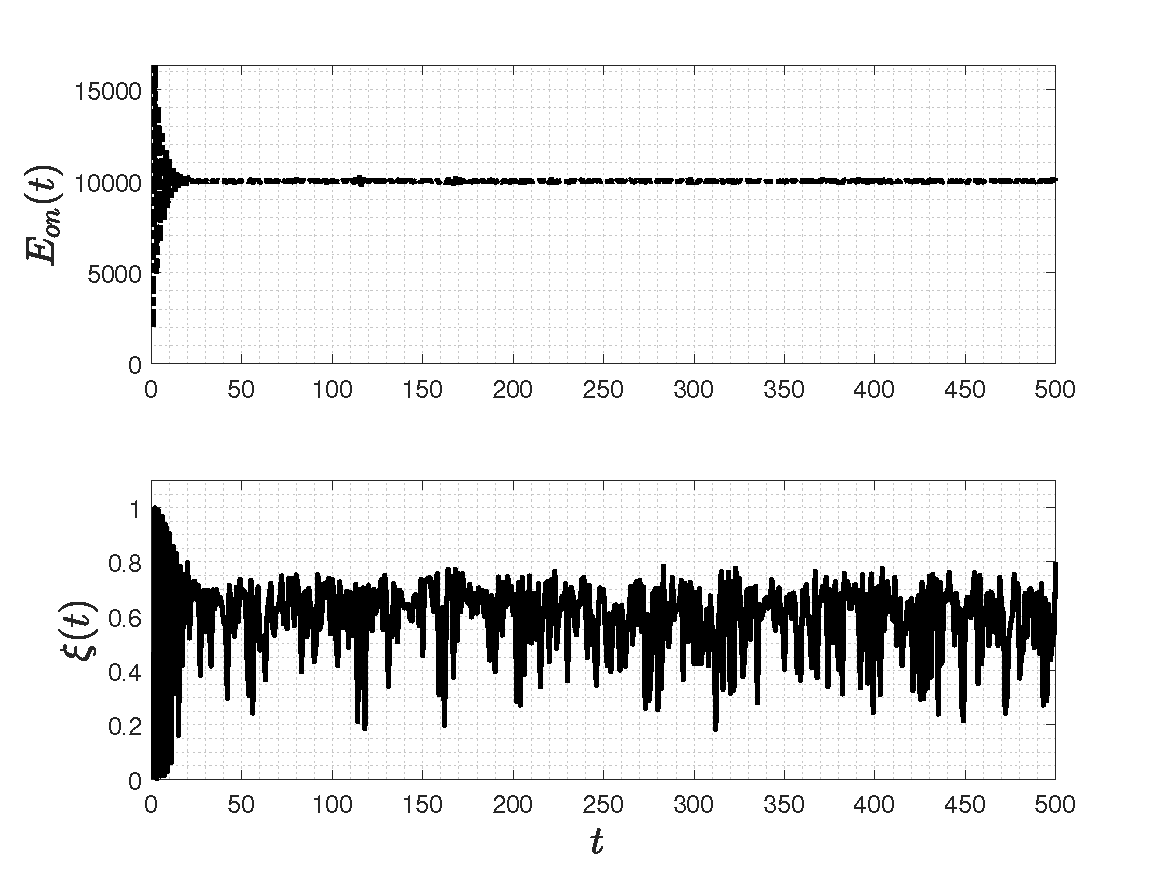}
        \caption{Region III: $\beta=\alpha=.9$ }
        \label{fig:sub4}
    \end{subfigure}
    \hfill
        \begin{subfigure}[b]{0.45\textwidth}
        \centering
        \includegraphics[width=\linewidth]{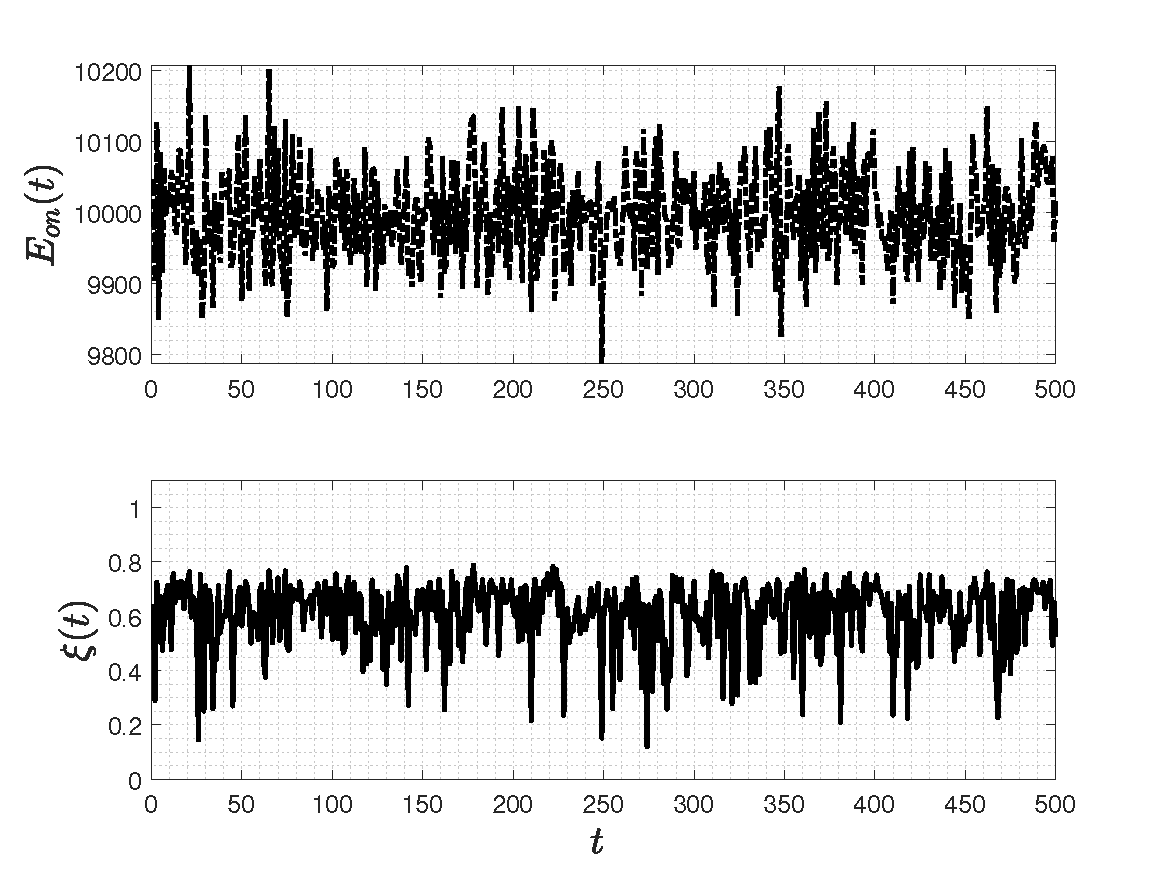}
        \caption{Region V: $\beta=\alpha=.5$}
        \label{fig:sub3}
    \end{subfigure}

    \caption{Evolution trajectories of the number of ISLs in the ON state and the constellation connectivity when the ISLs are located in different regions. The total number of satellites is 10,000.}
    \label{fig:four_sub8}
\end{figure*}

\subsection{Simulation of Capacity Scalability}
The simulation procedure for capacity scalability under dynamic link failures is as follows: For a constellation of size $n$, $n$ source-destination pairs are generated through  random permutation. The constellation topology is constructed, where each satellite initially establishes four ISLs. Each ISL evolves for $t_{0}$ time slots according to the given failure and recovery probabilities until reaching a steady state. Starting from time slot $t_{0} + 1$, the shortest hop count for each source-destination pair is computed using Dijkstra's algorithm. If no path is found, it is recorded as $0$. This process is repeated for $t_{1}$ time slots. The state transition probabilities of the ISLs and the average hop count are then evaluated and substituted into $\tau(n,k)$ in (\ref{TH1}) to obtain the simulation results. In computing the shortest paths between source–destination pairs, we use the perfect constellation topology information. The shortest hop count for successfully establishing a source–destination routing path is minimized.

Given $\sigma =  10^{-12}$, $t_{0} = 100$, $t_{1} = 800$, and $k = 10$, we set $(\alpha, \beta) = ( 10^{-5}, 0.8) \in \text{Region II}$. The capacity scalability is simulated for $n = 100, 400, 900, 1600, 2500, 3600, 6400$ and $8100$. Each data point is obtained by averaging over 1000 simulation runs. The simulation results are shown in Fig. \ref{fig9}. Under perfect constellation topology information, the simulated capacity scalability can attain the theoretical upper bound. This implies that shortest-hop routing is one of the strategies for achieving the upper bound of capacity scalability, with consensus overhead minimized. The capacity scalability is extremely sensitive to fluctuations in the evolutionary state of ISLs. When capacity scalability reaches its maximum, the simulation results exhibit greater variability. The  overhead for maintaining the constellation structure and discovering routing paths becomes unstable. Under imperfect constellation topology information, the capacity scalability is further reduced.  This is because the overhead required for establishing routing paths and maintaining ISLs increases with the scale of the satellite constellation. If the constellation structure is used to support local traffic rather than all-to-all traffic, capacity scalability can be further improved because the minimum hop count between source–destination pairs is reduced.

From Theorem \ref{STH1}, the functional form of capacity scalability under the all-to-all traffic model is given as follows:
$$f(n)=\frac{16\sqrt{n}}{1+an\bar{L}+4bn\bar{H}_{k}(\alpha,\beta)},$$
$a$ and $b$ are determined by the routing strategy and the ISL maintenance protocol, respectively. For any given set of protocols, the average hop count $\bar{L}$ and the state transition probabilities of the ISLs can be obtained through simulations on a small scale constellation. The parameters $\hat{a}$ and $\hat{b}$ are then fitted from the simulation data. Based on the empirical formula, we predict whether the constellation scalability reaches its maximum. For example, under minimal consensus overhead, the shortest-hop routing strategy yields $\hat{a} = \mathrm{5.44}e^{-7}$ and $\hat{b} = \mathrm{0.965}$.
  \begin{figure}[!htbp]
    \centering
    \includegraphics[scale=0.625]{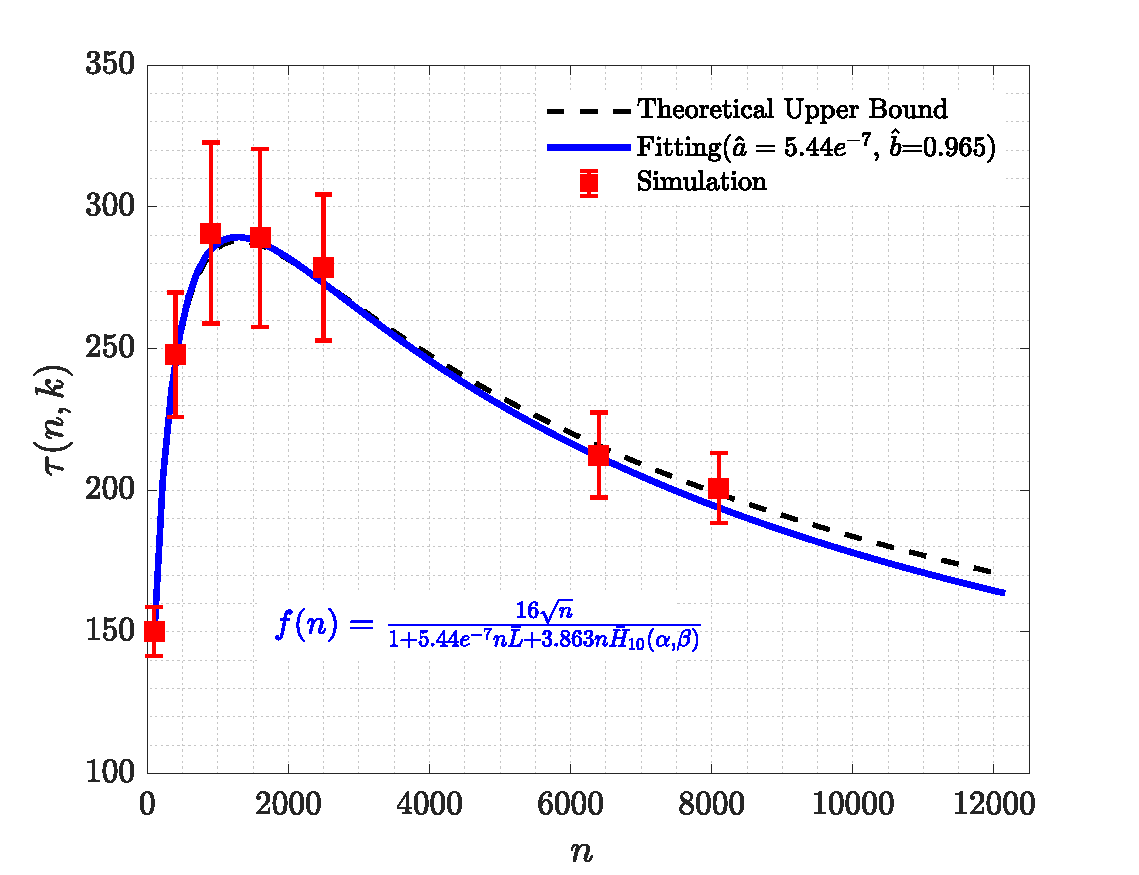}
    \caption{Capacity scalability  vs. $n$.  }
    \label{fig9}
\end{figure}
\section{CONCLUSIONS\label{SECVII}}
The upper bound on capacity scalability under uniform traffic patterns is derived by incorporating the failure evolution behavior of ISLs. For any protocol, the capacity scalability converges to zero as the constellation size approaches infinity. The optimal constellation size is identified. When the constellation size is below this optimal value, the capacity scalability increases with constellation size, implying that scaling up the network can achieve higher effective capacity for data packet transport. In the future, quantifying capacity scalability under non-uniform traffic patterns remains an important research direction. It can be anticipated that capacity scalability under non-uniform traffic is higher than that under uniform traffic, while the asymptotic convergence to zero still holds. This is because, under dynamic failures, the consensus overhead  scales as $\Theta(n)$.

\bibliographystyle{IEEEtran}
\bibliography{Capacity_vs_Overhead}

\end{document}